\documentclass[pdflatex,sn-mathphys-num]{sn-jnl}

\usepackage{hyperref}
\usepackage{graphicx}%
\usepackage{multirow}%
\usepackage{amsmath,amssymb,amsfonts}%
\usepackage{amsthm}%
\usepackage{mathrsfs}%
\usepackage[title]{appendix}%
\usepackage{xcolor}%
\usepackage{textcomp}%
\usepackage{manyfoot}%
\usepackage{booktabs}%
\usepackage{algorithm}%
\usepackage{algorithmicx}%
\usepackage{algpseudocode}%
\usepackage{listings}%

\usepackage{tikz}
\usetikzlibrary{positioning}
\usetikzlibrary{fit,patterns,arrows.meta,decorations.pathmorphing}
\usepackage{xcolor}
\usepackage{amsmath}
\usepackage{caption}
\usepackage{float}
\usepackage{graphicx}

\def\h{\mathbf{h}}
\def\C{\mathcal{C}}
\def\h{\mathbf{h}}
\def\c{\mathbf{c}}
\def\Fq{\mathbb{F}_{q}}
\def\F{\mathbb{F}}

\def\u{\mathbf{u}}

\def\v{\mathbf{v}}
\def\e{\mathbf{e}}

\def\x{\mathbf{x}}

\def\G{\mathcal{G}}

\def\W{\mathcal{W}}

\def\0{\mathbf{0}}
\def\1{\mathbf{1}}
\def\V{\mathcal{V}}

\def\L{\mathcal{L}}

\numberwithin{equation}{section}
\newtheoremstyle{mystyle}{3pt}{3pt}{}{\parindent}{\bfseries}{.}{5mm}{}  
\theoremstyle{mystyle}

\newtheorem{Definition}{Definition}[section]

\newtheorem{Lemma}{Lemma}[Definition]
\newtheorem{Construction}[Definition]{Construction}
\newtheorem{Corollary}[Definition]{Corollary}
\newtheorem{Theorem}[Definition]{Theorem}
\newtheorem{Example}[Definition]{Example}

\newtheorem{Open problem}[Definition]{Open problem}

\begin{document}

\title[Article Title]{Constructions of  locally repairable codes  via concatenated codes\footnote{This research is supported by the National Key Research and Development Program of China (Grant No. 2022YFA1005000), the National Natural Science Foundation of China (Grant No. 62371259), the Fundamental Research Funds for the Central Universities of China (Nankai University), and the Nankai Zhide Foundation.}}


\author[1]{\fnm{Hengfeng} \sur{Jin}}\email{jinhengfeng@mail.nankai.edu.cn}

\author*[1]{\fnm{Fang-Wei} \sur{Fu}}\email{fwfu@nankai.edu.cn}

\affil[1]{\orgdiv{Chern Institute of Mathematics and LPMC}, \orgname{Nankai University}, \orgaddress{\city{Tianjin}, \postcode{300071}, \country{China}}}




\abstract{In recent years, locally repairable codes (LRCs) have attracted considerable  attention owing to their pivotal role in distributed storage systems.
Since binary linear locally repairable codes can significantly reduce the complexity of both encoding and decoding processes, the construction of binary LRCs has attracted extensive research interest.  In this paper, we construct locally repairable codes via
concatenated codes and present a systematic approach to select outer codes to obtain  optimal binary LRCs, where the outer codes are linear codes over $\F_4$. The weight distributions of the resulting LRCs are determined by the weight distributions of the selected linear codes over $\F_4$. Furthermore, several classes of optimal binary locally repairable codes are constructed,  including binary LRCs meeting  the Griesmer-like bound, and binary perfect LRCs.  Meanwhile, for the locality $r=2$, we improve the Johnson-like bound for binary LRCs with disjoint local repair groups established by Ma and Ge, and construct explicit LRCs that attain this new bound.}

\keywords{ Locally repairable codes, Concatenated codes, Weight distributions, Griesmer-like  bound, Perfect LRCs, Nearly perfect LRCs. }

\maketitle

\section{Introduction}\label{sec1}

Distributed storage systems, such as data centers, store data in a redundant form to ensure reliability in the event of node failures. Erasure codes are widely utilized in distributed storage systems. Locally repairable codes (LRCs), proposed by Gopalan \emph{et al.} \cite{Gopalan}, are a significant class of erasure codes widely utilized in distributed storage systems where they offer notable benefits such as low storage cost and efficient data recovery. LRCs are also used for coded distributed computing to encode distributed tasks in a linear matrix-vector multiplication problem, as demonstrated in \cite{fetrat}.

In this context, let $\C$ be an $[n,k,d]$ linear code over $\Fq$, which is a subspace of $\Fq^n$ with dimension $k$ and minimum distance $d$. Then $\C$ is called a locally repairable code with locality $r$ if any failed code symbol can be recovered by accessing at most $r$ other code symbols.
Moreover, $r$ should be designed to be significantly smaller than $k$. At the same time, larger values of $k$ and $d$ are desirable to achieve high storage efficiency and strong global fault tolerance, respectively. Indeed, there exists a tradeoff among these parameters. Gopalan \emph{et al.}  \cite{Gopalan} proved that the minimum distance  of an LRC is upper bounded by
\begin{equation}\label{s1}
    d \leq n-k-\left \lceil   \frac{k}{r}\right \rceil+2.
\end{equation}

Since then, numerous scholars have constructed various $d$-optimal LRCs achieving the Singleton-like bound \eqref{s1} (see \cite{Tamo1,Jin1,Hao1,Luo1,Cai2}).

Binary LRCs  are of particular interest because they do not require multiplications during encoding, decoding, or repair processes.  Hao \emph{et al.} \cite{Hao} proved that there exist only 5 classes of binary LRCs that meet the bound (\ref{s1}).

However, the Singleton-like bound does not take into account the size of the finite field. As a result,  this bound may not be tight for binary LRCs. There has been extensive research on binary locally repairable codes (see \cite{Pengfei,Shah,Luo2021tcom,Wang,Ma}).

Cadambe and Mazumdar \cite{Cadambe} proposed the $C$-$M$ bound for LRCs, which is the first field size dependent bound, as follows:
\begin{equation}
    k\leq  \min_{\tau \in Z_{+}} [\tau r+k_{\mathrm{opt}}^{q}(n-\tau(r+1),d)]
\end{equation}
where $\tau \leq \min \{\left \lceil \frac{n}{r+1} \right \rceil, \left \lceil \frac{k}{r}\right \rceil\}$, and $k_{\mathrm{opt}}^{q}
(n' , d' )$ denotes the largest possible dimension of a $q$-ary linear code with length $n'$ and minimum distance $d'$. Hao \emph{et al.} \cite{Hao} also derived several field size-dependent bounds for LRCs based on parity check matrices, including the Griesmer-like bound. In \cite{Gaoyun}, Gao and Fang constructed optimal locally repairable codes meeting the Griesmer-like bound by lengthening Reed–Muller codes.

In \cite{Wang}, Wang \emph{et al.}  proposed a sphere-packing approach for deriving an upper bound on the dimension $k$ of $[n, k, d]$ LRCs with locality $r$. Through specific comparisons, it was demonstrated that this bound outperforms the $C$-$M$ bound in certain cases. 
Building on \cite{Wang}, Fang \emph{et al.} \cite{Fang} introduced the concept of perfect LRCs and constructed several classes of such codes. They further constructed perfect LRCs using cyclic and constacyclic codes in \cite{FANG2023}.
In \cite{Ma}, Ma and Ge established an explicit bound on the dimension $k$ of binary LRCs by employing the similar argument for deriving the Johnson bound for binary codes in \cite[Theorem 2.3.8]{Huffman}.

Meanwhile, the weight distribution plays an important role in the  decoding processes of linear codes. It is known in \cite{Shen} that the decoding error probability of a linear code over an erasure channel under maximum likelihood decoding can be expressed in terms of its weight distribution.

 In this paper, we construct locally repairable codes via
concatenated codes and present a systematic approach to select outer codes to obtain optimal LRCs, where the outer codes are linear codes over $\F_4$. The weight distributions of LRCs are determined by the weight distributions of the chosen linear codes over $\F_4$.  Furthermore, we construct several classes of optimal binary locally repairable codes including the Griesmer-like LRCs and  perfect LRCs. Meanwhile, for $r=2$, we improve the Johnson-like bound for binary LRCs with disjoint local repair groups established by Ma and Ge, and construct explicit LRCs that attain this new Johnson-like bound.

The rest of this paper is organized as follows. In Section \uppercase\expandafter{\romannumeral2}, we introduce some notations and  basic concepts  related to concatenated codes, LRCs and projective geometry. In Section \uppercase\expandafter{\romannumeral3}, we propose a systematic method to construct binary LRCs  via concatenated codes, where the outer codes are linear codes over $\F_4$, and present the relationship between their weight distributions.  In Section \uppercase\expandafter{\romannumeral 4}, we construct optimal binary LRCs attaining the Griesmer-like bound. In Section \uppercase\expandafter{\romannumeral 5}, we construct  perfect LRCs. In Section \uppercase\expandafter{\romannumeral 6}, for the case $r=2$, we sharpen the bound established by Ma and Ge and construct explicit LRCs that attain this new bound. In Section \uppercase\expandafter{\romannumeral 7}, we conclude this paper.

\section{Preliminaries}
\label{PRELIMINARIES}
In this section, we give some notations and  basic concepts related to concatenated codes, LRCs and projective geometry.

Let $[n]$ denote the set $\{1, 2,\dots, n\}$ and $[a:b]$ denote the set $\{a, a+1,\dots, b\}$. Let $q$ be a prime power and  $\F_q$ be a finite field with $q$ elements.  Let $\Fq^\ast= \Fq \setminus \{0\}$ and $\F_q^n$ denote the vector space of dimension $n$ over $\F_q$.

For a vector $\v = (v_1, v_2,\dots, v_n)\in \Fq^n$
, the support of $\v$ is
defined as $\mathrm{supp}(\v) \triangleq \{i \in [n]:  v_i\neq 0\}$.  The Hamming weight $\mathrm{wt}(\v)$ of $\v$ is defined as the size of $\mathrm{supp}(\v)$, i.e., $\mathrm{wt}(\v) = |\mathrm{supp}(\v)|$.   The Hamming distance between two vectors $\u$ and $\v$ in $\Fq^n$ is defined as $d(\u,\v)\triangleq \mathrm{wt}(\u-\v)$.

Let $\v^T$ denote the transpose of vector $\v$, and $\0_n$ and $\1_n$ denote the all-zero and all-one vectors of length $n$ over $\F_q$, respectively. Let $\operatorname{span}	\{\v,\v'\}$ denote the subspace of $\F_q^n$ spanned by $\{\v,\v'\}$.

Let $\C$ be an $[n,k,d]$ linear code  over $\Fq$, which is a subspace of $\Fq^n$ with dimension $k$ and minimum distance $d$, where $d=\min\{d(\c_1,\c_2)|\c_1\neq \c_2,\c_1,\c_2 \in \C \}$. For $\v\in \F_q^n$, $d(\v, \C)\triangleq\min\{d(\v,\c)| \c \in C\}$. The generator matrix of $\C$ is any $k\times n$ matrix $G$ over $\Fq$ whose rows form a basis for $\C$. The parity-check matrix of  $\C$ is an $(n-k) \times n$ matrix $H$ over $\Fq$ with $\mathrm{rank}(H)=n-k$ such that for every $\c\in C$, $H\c^T=\0_{n-k}^T$.

 It is well known that  the  minimum distance of $\C$ is $d$ if and only if any $d-1$ columns of $H$ are linearly independent and there exist $d$ columns of $H$ which are linearly dependent.  

We generally denote the weight distribution of a code $\C$ by $A_0(\C)$, $A_1(\C)$, $\dots$, $ A_n(\C)$, where $A_i(\C)$ is the number of codewords of weight $i$.

Concatenated codes \cite{concatenated} form a fundamental class of error-correcting codes in coding theory
 that combine two or more simpler codes to achieve high levels of data reliability with manageable computational complexity.

We describe the construction of a concatenated code over $\F_q$. Let $\C_{\mathrm{in}}$ be an $[n, k, d]$ inner linear code over $\F_q$, and $\C$ be an $[N,K,D]$ outer linear code  over $\F_{q^k}$. Let f be a one-to-one mapping  from $\F_{q^k}$ to  $\C_{\mathrm{in}}$. Then the corresponding concatenated code \[\C'=\{(f(c_1)| f(c_2)| \dots | f(c_N))\mid \c=(c_1,c_2,\dots, c_N)\in \C\}.\]

It is straightforward to verify that $\C'$ is an $[nN, kK, \geq dD]$ linear code over $\F_q$. More details can be found in \cite{concatenated, Roth_2006}.

Next, we give the definition of locally repairable codes.
\begin{Definition}\label{df1}
The  $i$-th code symbol $c_{i}$ $(1 \le i \le n)$ of a linear code $\C$ is said to have the $r$-locality if there exists a subset $S_i \subset [n] $ with $|S_i|\leq r+1$ such that $i\in S_i$ and $\c_{i}$ can be recovered by $\{c_j\},  j\in S_i \setminus\{i\}$; i.e., $c_i$ is a linear combination of $\{c_j\},j\in S_i \setminus\{i\}$. The set $S_i$ is called a repair group for $c_i$.

\end{Definition}

An $[n,k,d]$ linear code $\C$ is said to have $r$-locality if all code symbols of $\C$ have $r$-locality, denoted as $[n,k,d;r]$-LRC.

\begin{Lemma}[\cite{Hao}]\label{lem1}
    A linear code $\C$ has locality $r$ if and only if for every $1\leq i \leq n$, there exists a codeword $\v$ in $\C^{\bot}$ whose support set $\mathrm{supp}(\v)$ contains $i$ and the size of $\mathrm{supp}(\v)$ is at most $r+1$.
\end{Lemma}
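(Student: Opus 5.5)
The plan is to prove the two directions separately. Both rest on one duality fact: a coordinate $c_i$ is a linear function of the coordinates indexed by a set $T\subseteq[n]\setminus\{i\}$ on all of $\C$ exactly when $\C^\perp$ contains a vector supported in $T\cup\{i\}$ with a nonzero entry at $i$. We then apply this with $T=S_i\setminus\{i\}$, so $|T\cup\{i\}|\le r+1$.

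For the direction ``$\Leftarrow$'', fix $i$ and take $\v\in\C^\perp$ with $i\in\mathrm{supp}(\v)$ and $|\mathrm{supp}(\v)|\le r+1$. Set $S_i=\mathrm{supp}(\v)$. For every $\c\in\C$ we have $\sum_{j\in S_i} v_jc_j=0$, and since $v_i\neq 0$,
\[
c_i=-v_i^{-1}\sum_{j\in S_i\setminus\{i\}} v_jc_j .
\]
This exhibits $c_i$ as a fixed linear combination of $\{c_j\}_{j\in S_i\setminus\{i\}}$. Since $|S_i|\le r+1$, symbol $i$ has $r$-locality in the sense of Definition~\ref{df1}.

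For the direction ``$\Rightarrow$'', assume symbol $i$ has repair group $S_i$ with $|S_i|\le r+1$. Interpreting Definition~\ref{df1} as intended, there are scalars $\lambda_j\in\Fq$, $j\in S_i\setminus\{i\}$, not depending on the codeword, such that $c_i=\sum_{j\in S_i\setminus\{i\}}\lambda_jc_j$ for all $\c\in\C$. Define $\v\in\Fq^n$ by $v_i=1$, $v_j=-\lambda_j$ for $j\in S_i\setminus\{i\}$, and $v_j=0$ otherwise. Then $\langle\v,\c\rangle=0$ for every $\c\in\C$, so $\v\in\C^\perp$. Moreover $i\in\mathrm{supp}(\v)\subseteq S_i$, hence $|\mathrm{supp}(\v)|\le r+1$.

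The only delicate point is justifying that ``recovered by'' means a single linear combination valid for every codeword. If recovery is only assumed to be some function, we argue as follows. Recoverability means the projection of $\C$ onto $S_i\setminus\{i\}$ determines $c_i$. By linearity, any codeword vanishing on $S_i\setminus\{i\}$ must then have $c_i=0$. Let $G_T$ be the restriction of a generator matrix $G$ to the columns in $T=S_i\setminus\{i\}$, and let $\mathbf{g}_i$ be column $i$ of $G$. The kernel condition above says $\ker(G_T^{\,T})\subseteq\ker(\mathbf{g}_i^{T})$ (left kernels of the restricted generator matrix). This places $\mathbf{g}_i$ in the column space of $G_T$, which yields the scalars $\lambda_j$ needed above. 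I expect this linear-algebra step to be the only real content of the proof; everything else is direct.
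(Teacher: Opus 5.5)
Your proof is correct. The paper does not prove this lemma but imports it from Hao et al., and your argument is the standard one: a fixed linear repair relation for $c_i$ on $S_i\setminus\{i\}$ is the same thing as a dual codeword supported in $S_i$ with nonzero $i$-th entry. Your extra paragraph reducing arbitrary (functional) recovery to linear recovery via $\ker(G_T^{T})\subseteq\ker(\mathbf{g}_i^{T})$ is also sound: appending the row $\mathbf{g}_i^{T}$ to $G_T^{T}$ leaves the kernel, hence the rank, unchanged. It is not needed, though, since Definition~\ref{df1} already requires a linear combination.
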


Next, we provide an overview of the key concepts of projective geometry that are relevant to our results.

\begin{Definition}
  Consider two nonzero vectors $\u$ and $\v$ in $\Fq^{n+1}$, if there exists $\lambda \in \Fq^\ast$, s.t. $\u=\lambda \v$, then we call $\u$ is equivalent to $\v$. The set of equivalence classes is the $n$-dimensional projective space over $\Fq$, denoted by $PG(n,q)$.
\end{Definition}

Refer to the elements of $PG(n,q)$ as points. For a nonzero vector $A \in \Fq^{n+1}$, we denote $P(A)$ as the point containing the vector A.

A subspace of dimension $m$, or an $m$-space of $PG(n,q)$ is a set of points whose representing vectors (together with the origin $\0$) form a subspace of dimension $m+1$ of $\Fq^{n+1}$. It is obvious that each $m$-space contains $\frac{q^{m+1}-1}{q-1}$ points. For convenience, we refer to a 1-space as a line. 

A set of points in $PG(n,q)$ is said to be collinear if all the points lie on a single line of the projective space.

\begin{Definition}\label{caps}
A $k$-cap in the projective geometry $PG(n,q)$ is a set of $k$ points such that no three of them are collinear. A cap in $PG(n,q)$ of maximum cardinality is called a maximal cap.
\end{Definition}

\section{ Construction of LRCs via concatenated codes }
Throughout this paper, we focus on binary LRCs with disjoint local repair groups.

An $[n,k,d;2]$ binary LRC $\C$   is said to have disjoint local repair groups if $3\mid n$  and the parity-check matrix $H$ can be represented in the following form:


    \begin{equation}\label{H1}
    \begin{aligned}
         H&=(\h_1,\h_2,\dots,\h_n)=(\h_1^0,\h_1^1,\h_1^2,\h_2^0,\h_2^1,\h_2^2,\dots, \h_{\ell}^0,\h_{\ell}^1,\h_{\ell}^2)\\
          &=
          \left(
         \begin{array}{ccc|ccc|c|ccc}
		1&1&1 & &&  & &&&\\
		&&& 1&1&1  & &&&\\
            &&& &&&  \ddots &&&\\
		&&& &&&  & 1&1&1\\
            \hline
		\mathbf{0}_{u}^T &\e_{1,1}^T &\e_{1,2}^T    &\mathbf{0}_{u}^T &\e_{2,1}^T &\e_{2,2}^T   &\dots &\mathbf{0}_{u}^T &\e_{\ell,1}^T &\e_{\ell,2}^T   
        
	\end{array}
    \right),
         \end{aligned}
   \end{equation}
where $n=3\ell$, $n-k=\ell+u$, $\h_m^T\in \F_2^{\ell+u}$ for $m\in [n]$, $(\h_i^j)^T \in \F_2^{\ell+u}$ and $\e_{i,j'} \in \F_2^u$ for  $i \in [\ell]$, $j=0,1,2$ and $j'=1,2$. The first $\ell$ rows  ensure the linear code has $2$-locality, and the remaining $u$ rows  ensure the linear code has minimum distance $d$. We say that $\mathcal{C}$ has $\ell$ disjoint local repair groups, where each group $i \in [\ell]$ consists of three column indices corresponding to the three columns $\{\mathbf{h}_i^0,\mathbf{h}_i^1,\mathbf{h}_i^2\}$.

Let $\W_i \triangleq \operatorname{span}	\{\e_{i,1}, \e_{i,2}\}$ for $i \in [\ell]$.

\begin{Theorem}\label{thm28}
    Let $\C$ be an $[n,k,d;2]$ binary LRC with the parity-check matrix  $H$  given in \eqref{H1}. Then, $d \geq 2t+2$ if and only if $\dim(\sum_{j=1}^{t}\W_{m_j})= 2t$ for all distinct $m_1,m_2,\dots,m_t \in [\ell]$. Furthermore, if there exist different indices $a_1,a_2,\dots,a_t,a_{t+1} \in [\ell]$ such that $\dim(\sum_{j=1}^{t+1}\W_{a_{j}})\neq  2t+2$, it follows that $d =2t+2$.
\end{Theorem}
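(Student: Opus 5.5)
The plan is to reduce minimum distance to linear dependence among columns of $H$. The minimum distance is at least $2t+2$ exactly when every set of at most $2t+1$ columns of $H$ is linearly independent. Equivalently, no nonzero binary vector $\x$ with $H\x^T=\0$ has weight at most $2t+1$. So the first step is to describe the support of such an $\x$ group by group.

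Fix a codeword $\x$ and restrict it to group $i$, i.e.\ to the positions of $\h_i^0,\h_i^1,\h_i^2$.
\begin{itemize}
\item The $i$-th of the first $\ell$ rows forces an even number of ones in group $i$, so the restriction has weight $0$ or $2$.
\item If the weight is $2$, group $i$ contributes $\e_{i,1}$, $\e_{i,2}$ or $\e_{i,1}+\e_{i,2}$ to the lower $u$ rows, according to which pair $\{0,1\}$, $\{0,2\}$ or $\{1,2\}$ is chosen.
\item Conversely, every such choice of pairs in a set $T$ of groups gives a vector satisfying the first $\ell$ rows.
\end{itemize}
Hence the nonzero codewords are exactly the choices of a nonempty set $T\subseteq[\ell]$ together with vectors $\w_i\in\{\e_{i,1},\e_{i,2},\e_{i,1}+\e_{i,2}\}$ for $i\in T$ satisfying $\sum_{i\in T}\w_i=\0$. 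Such a codeword has weight $2|T|$. In particular every nonzero codeword has even weight, so $d$ is even.

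With this description the equivalence follows in both directions.
\begin{itemize}
\item ($\Leftarrow$) Assume $\dim(\sum_{j=1}^t\W_{m_j})=2t$ for every $t$ distinct indices. Then for every $|T|\le t$, the sum $\sum_{i\in T}\W_i$ is direct and each $\W_i$ has dimension $2$. The vectors $\e_{i,1},\e_{i,2}$ for $i\in T$ are then $2|T|$ vectors spanning a space of dimension $2|T|$, so they are linearly independent.
\item A sum $\sum_{i\in T}\w_i$ with each $\w_i$ a nonzero element of $\W_i$ therefore cannot vanish. Consequently no nonzero codeword has weight $\le 2t$. Since weights are even, $d\ge 2t+2$.
\item ($\Rightarrow$) Suppose some $t$ distinct indices have $\dim(\sum_{j=1}^t\W_{m_j})<2t$. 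Then the $2t$ vectors $\e_{m_j,1},\e_{m_j,2}$ are linearly dependent, giving a nontrivial relation $\sum_j(\alpha_j\e_{m_j,1}+\beta_j\e_{m_j,2})=\0$.
\item Let $T$ be the set of $j$ with $(\alpha_j,\beta_j)\ne(0,0)$; it is nonempty and $|T|\le t$. For $j\in T$, the vector $\w_j=\alpha_j\e_{m_j,1}+\beta_j\e_{m_j,2}$ is one of the three admissible forms.
\item Choosing the corresponding pair of positions in each group $m_j$ with $j\in T$ gives a nonzero codeword of weight $2|T|\le 2t$. This contradicts $d\ge 2t+2$.
\end{itemize}

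For the final claim, assume the dimension condition holds for all $t$-subsets, so $d\ge 2t+2$. Assume also $\dim(\sum_{j=1}^{t+1}\W_{a_j})\ne 2t+2$ for some distinct $a_1,\dots,a_{t+1}$. The argument of the ($\Rightarrow$) direction, with $t+1$ in place of $t$, produces a nonzero codeword of weight at most $2t+2$. Hence $d=2t+2$.

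The main obstacle is the exact correspondence between codewords and the choices of pairs of positions with their vectors $\w_i$. In particular one must check that the dimension drop, and not merely a relation among some subset of the groups, yields a valid codeword. Dropping the groups whose coefficients vanish handles this.
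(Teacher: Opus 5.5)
Your proposal is correct and follows essentially the same route as the paper. In both, the local-group rows force each group's restriction to have weight $0$ or $2$, and each chosen pair contributes $\e_{i,1}$, $\e_{i,2}$ or $\e_{i,1}+\e_{i,2}$, so codewords of small weight correspond exactly to nontrivial relations among the $\e_{m_j,1},\e_{m_j,2}$, which are lifted back by choosing the matching pairs. The difference is presentational: you describe all codewords at once, which gives the evenness of $d$ directly, whereas the paper analyzes sets of $2t+1$ columns. Like the paper, you tacitly need $t\le\ell$ when extending a set $T$ with $|T|<t$ to $t$ indices.
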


\begin{proof}
 Note that $\dim(\sum_{j=1}^{t}\W_{m_j})=2t$ for all distinct $m_1,m_2,\dots,m_t \in [\ell]$ if and only if $2t$ vectors $\e_{m_1,1},\e_{m_1,2},\e_{m_2,1},\e_{m_2,2},\dots,\e_{m_t,1},\e_{m_t,2}$ are linearly independent.  On the other hand, $d\geq 2t+2$ if and only if any 
 $2t + 1$ columns of $H$ are linearly independent.

For any choice of $2t+1$ distinct columns  $\h_{j_1},\h_{j_2},\dots,\h_{j_{2t+1}}$ of the parity-check matrix $H$, we need to consider the following equation.
\begin{equation}\label{2t+1a}
   \lambda_1 \h_{j_1}+\lambda_2 \h_{j_2}+\dots+\lambda_{2t+1} \h_{j_{2t+1}}=\0_{\ell+u},
\end{equation}
where $\lambda_i \in \F_2$, $i=1,2,\dots,2t+1$.

If only one column is selected from a local repair group, without loss of generality, let $\h_{j_1}$ denote the unique column selected from the first local repair group. It is necessary that  $\lambda_1=0$ to ensure that the first row of (\ref{2t+1a}) holds. Then, we only need to show that whether the remaining $2t$ columns are linearly independent.

Given that we aim to prove the linear independence of any 
$2t+1$ columns, it suffices to consider the following constraint: if a column is chosen from a given local repair group, then at least two  columns must be selected from that group. The $2t+1$ columns can be selected from at most $t$ local repair groups.

Suppose that two columns are selected from one local repair group.
Without loss of generality, we let $\h_{j_1},\h_{j_2}$ represent the two columns chosen from the first local repair group. Enforcing the first row of Equation (\ref{2t+1a}) to hold gives  $\lambda_1+\lambda_2=0$, and $\lambda_i \in \F_2$, which implies $\lambda_1=\lambda_2 \in \{0,1\}$.

Furthermore, suppose that three columns are selected from one local repair group. Without loss of generality, we let $\h_{j_3},\h_{j_4},\h_{j_5}$ represent the three columns chosen from the second local repair group. Enforcing the  second row  of Equation  (\ref{2t+1a}) to hold,   we have $\lambda_3+\lambda_4+\lambda_5=0$. Since  $\lambda_i \in \F_2$, if not all $\lambda_i$, $i=3,4,5$ are zero,  then exactly one of them is 0 and the remaining two are 1.

Thus, if there exist nonzero coefficients $\lambda_i \in \F_2$, where $i\in [2t+1]$, satisfying Equation (\ref{2t+1a}), then the number of such nonzero coefficients must be even.

To prove that any $2t+1$ columns of $H$ are linearly independent, it suffices to consider  the worst-case scenario: any set of $2t$ columns, consisting of exactly two columns  $(\h_{j_1},\h_{j_2})$ from the local repair group $m_1$, $(\h_{j_3},\h_{j_4})$ from the local repair group $m_2$,$\dots$, $(\h_{j_{2t-1}},\h_{j_{2t}})$ from the local repair group $m_t$, is linearly independent.

If two columns are selected from a local repair group $i$, there exist three distinct combinations: $\h_i^0+\h_i^1=(\0_\ell, \e_{i,1})^T$, $\h_i^0+\h_i^2=(\0_\ell, \e_{i,2})^T$, and $\h_i^1+\h_i^2=(\0_\ell, \e_{i,1}+\e_{i,2})^T$. 

Thus, if there exist nonzero coefficients $\lambda_i \in \mathbb{F}_2$ such that Equation (\ref{2t+1a}) holds, then, as established earlier, the number of such nonzero coefficients must be even. Equation (\ref{2t+1a}) then simplifies to
\begin{equation}
\alpha_1\e_{m_1,1}+\alpha_2\e_{m_1,2}+\alpha_3\e_{m_2,1}+\alpha_4\e_{m_2,2}+\dots +\alpha_{2t-1}\e_{m_t,1}+\alpha_{2t}\e_{m_t,2}=\0_u,
\end{equation}
where $\alpha_j \in \F_2$ for $j\in [2t]$; for $i\in [t]$, the following cases hold:
\begin{itemize}
    \item if $\lambda_{2i-1}=\lambda_{2i}=0$, then $\alpha_{2i-1}=\alpha_{2i}=0$;
\item  if $\lambda_{2i-1}=\lambda_{2i}=1$, 
\begin{enumerate}
    \item if $\{\h_{j_{2i-1}},\h_{j_{2i}}\}=\{\h_{m_i}^0,\h_{m_i}^1\}$, then  $(\alpha_{2i-1},\alpha_{2i})=(1,0)$;
    \item if $\{\h_{j_{2i-1}},\h_{j_{2i}}\}=\{\h_{m_i}^0,\h_{m_i}^2\}$, then $(\alpha_{2i-1},\alpha_{2i})=(0,1)$;
    \item if $\{\h_{j_{2i-1}},\h_{j_{2i}}\}= \{\h_{m_i}^1,\h_{m_i}^2\}$, then $(\alpha_{2i-1},\alpha_{2i})=(1,1)$.
\end{enumerate} 
\end{itemize}

In other words,  the vectors $\e_{m_1,1},\e_{m_1,2},\e_{m_2,1},\e_{m_2,2},\dots,\e_{m_t,1},\e_{m_t,2}$ are linearly dependent. This contradicts the conditions provided that $\dim(\sum_{j=1}^{t}\W_{m_j})=2t$ for all distinct $m_1,m_2,\dots,m_t \in [\ell]$. Hence any $2t+1$ columns of H are linearly independent, which implies $d\geq 2t+2$.

 If there exist $m_1,m_2,\dots, m_t$ such that $\dim(\sum_{j=1}^{t}\W_{m_j})\neq 2t$. This implies that there exist coefficients  $\alpha_i\in \F_2$ (not all zero) for $i\in [2t]$ such that $\sum_{j=1}^{t}\alpha_{2j-1}\e_{m_j,1}+\alpha_{2j}\e_{m_j,2}= \0_{u}$. We want to prove that $d< 2t+2$. To do this, it suffices to show that there exist 
 $2t + 1$ columns of $H$  which are linearly dependent.

We can select $2t$ columns $\h_{j_1}$,$\h_{j_2}$,$\dots$,$\h_{j_{2t}}$  according to the following rules. For each $i\in [t]$
 \begin{itemize}
     \item if $(\alpha_{2i-1},\alpha_{2i})=(1,0)$, then we take $\{\h_{j_{2i-1}},\h_{j_{2i}}\}=\{\h_{m_i}^0,\h_{m_i}^1\}$;
     \item if $(\alpha_{2i-1},\alpha_{2i})=(0,1)$, then we take $\{\h_{j_{2i-1}},\h_{j_{2i}}\}=\{\h_{m_i}^0,\h_{m_i}^2\}$;
    \item if $(\alpha_{2i-1},\alpha_{2i})=(1,1)$, then we take $\{\h_{j_{2i-1}},\h_{j_{2i}}\}= \{\h_{m_i}^1,\h_{m_i}^2\}$;
    \item if $(\alpha_{2i-1},\alpha_{2i})=(0,0)$, then $\{\h_{j_{2i-1}},\h_{j_{2i}}\}$ can be taken as any  2-element subset of $\{\h_{m_i}^0,\h_{m_i}^1,\h_{m_i}^2\}$.
 \end{itemize}
Consequently, there exist coefficients $\lambda_i\in \F_2$ (not all zero) for $i\in [2t]$  such that 
$\sum_{i=1}^{2t}\lambda_{i}\h_{j_{i}}= \0_{l+u}$, where $\lambda_{2j-1}=\lambda_{2j}=1$, if $(\alpha_{2i-1},\alpha_{2i})\neq (0,0)$ and $\lambda_{2j-1}=\lambda_{2j}=0$, if $(\alpha_{2i-1},\alpha_{2i})= (0,0)$. In other words, these $2t$ columns of $H$ are linearly dependent, which implies that $d<2t+2$.
 
 In summary, $d\geq 2t+2$ if and only if $\dim(\sum_{j=1}^{t}\W_{m_j})=2t$   for all distinct $m_1,m_2,\dots,m_t \in [\ell]$.  
 
 The above-mentioned condition holds for any positive integer $t$. Moreover, if there exist different indices $a_1$, $a_2$, $\dots$, $a_t$, $a_{t+1} \in [\ell]$ such that $\dim(\sum_{j=1}^{t+1}\W_{a_j})\neq  2t+2$, then by the preceding proof and in accordance with the above rules, there exists a selection of $2t+2$ columns $\h_{j_1}$,$\h_{j_2}$,$\dots$,$\h_{j_{2t+2}}$ from these $t+1$ local repair groups $a_1$, $a_2$, $\dots$, $a_{t+1}$, that  are linearly dependent. This implies $d<2t+3$, and thus we have $d=2t+2$.
\end{proof}


Let $\C_{\mathrm{in}}$ be a linear $[n, k]$ code over $ \F_q$ and let $\Omega= (w_1,w_2,\dots,w_k)$ be
a basis of $\F_{q^k}$ over $\F_q$. Fix a $k\times n$ generator matrix $G_{\mathrm{in}}$ of $\C_{\mathrm{in}}$, and define the concatenated code $\C'$ over $\F_q$ by a linear $[N,K]$ outer code $\C$ over $\F_{q^k}$ and a one-to-one mapping $f:  \F_{q^k} \rightarrow \C_{\mathrm{in}}$, where for every  vector $\x \in \F_q^{k}$,
\begin{equation*}
    f(\Omega \x^T) = \x G_{\mathrm{in}}.
\end{equation*}

For each element $\alpha \in \mathbb{F}_{q^k}$, define the linear map $\psi_\alpha : \mathbb{F}_{q^k} \to \mathbb{F}_{q^k}$ by $\psi_\alpha(x) = \alpha x$. Fix a basis $\Omega = \{w_1,\dots,w_k\}$ of $\mathbb{F}_{q^k}$ over $\mathbb{F}_q$, and let $L(\alpha) = (l_{i,j}(\alpha))_{i,j \in [k]}$ be the $k \times k$ matrix over $\mathbb{F}_q$ that represents $\psi_\alpha$ with respect to $\Omega$. That is, for every vector $\mathbf{x} \in \mathbb{F}_q^k$,
\begin{equation*}
    \psi_{\alpha} (\Omega \x^T)=\Omega L(\alpha)\x^T,
\end{equation*}
where $L(\alpha)=(l_{i,j}(\alpha))_{i\in [k],j\in[k]}$ such that $\sum_{i=1}^kw_il_{i,j}(\alpha)=w_j\alpha$.

For example,  let $w$ be the primitive element of $\F_4$, and its minimal polynomial over $\F_2$ be $x^2+x+1$.  Let $\Omega=\{1,w\}$. 
Since $\Omega L(\alpha)=(\alpha,w\alpha)$, $w^2=1\cdot1+1\cdot w$, $w^3=1\cdot 1+0\cdot w$, it follows that
   \begin{equation*}
       L(0)=\left( \begin{array}{cc}
            0 &0\\
          0& 0 
       \end{array}\right),
       L(1)=\left( \begin{array}{cc}
            1 &0\\
          0& 1
       \end{array}\right),
       L(w)=\left( \begin{array}{cc}
            0 &1\\
          1& 1
       \end{array}\right),
       L(w^2)=\left( \begin{array}{cc}
            1 &1\\
          1& 0
       \end{array}\right).
\end{equation*}

Let $Q$ be a $k \times n$ matrix over $\F_q$ that satisfies $QG_{\mathrm{in}}^T = I$, where $I$ is the $k \times k$ identity matrix.
Let $H_{\mathrm{in}}$ be an $(n-k) \times n$ parity-check matrix of $\C_{\mathrm{in}}$ over $\F_q$ and $H= (h_{i,j})_{i\in [N-k],j\in [N]}$ be an $(N-K)\times N$ parity-check matrix of $\C$ over $\F_{q^k}$. In \cite[Page 382]{Roth_2006}, the $(nN-kK)\times nN$ parity-check matrix of the concatenated $\C'$ is given by
\begin{equation}\label{contH}
 \begin{aligned}
         H'&= \left(\begin{array}{c|c|c|c}
		H_{\mathrm{in}}& & &\\
		&H_{\mathrm{in}}& & \\
        &&\ddots&\\
        & & &H_{\mathrm{in}}\\
        \hline
        L(h_{1,1})Q &L(h_{1,2})Q &\cdots& L(h_{1,N})Q\\
	   L(h_{2,1})Q &L(h_{2,2})Q &\cdots& L(h_{2,N})Q\\
        \vdots &\vdots &\vdots& \vdots\\
         L(h_{N-K,1})Q &L(h_{N-K,2})Q &\cdots& L(h_{N-K,N})Q
        
	\end{array}
    \right).
         \end{aligned}
\end{equation}

Obviously, let  $\C_{\mathrm{in}}$ be an $[r+1,r,2]$ code over $\F_q$ with the parity-check matrix $H_{\mathrm{in}}=\1_{r+1}$.  Let $\C$ be an $[N,K,D]$ code over $\F_{q^r}$. Based on Lemma \ref{lem1} and the first $N$ rows of the parity-check matrix $H'$ (see \cite[Equation (3)]{Hao}), it directly follows  that  $\C'$ is an $[(r+1)N, rK, \geq 2D;r]$ LRC over $\F_q$ with disjoint local repair groups.  

In \cite{Cadambe}, concatenated codes are employed to construct LRCs with an arbitrary linear outer code over $\mathbb{F}_{q^r}$. However, the work only provides an existence proof for a class of LRCs attaining the Plotkin-like bound, without offering explicit constructions or systematic criteria for outer code selection.  Compared with this work, we provide an optimal outer code selection criterion for the concatenated code instead of giving only an existence proof. In this paper, we focus on the specific case $q = 2$ and $r = 2$.

\begin{Construction}\label{constr}
    As mentioned earlier, we fix the inner code $\C_{\mathrm{in}}$ as a [3,2,2] linear code over $\F_2$ with the parity-check matrix $H=(1,1,1)$,  the outer  code $\C$ is  an $[n_1, k_1, d_1]$ linear code over $\F_4$, then the concatenated code $\C'$ is an $[n=3n_1,k=2k_1,d=2d_1;r=2]$ binary LRC.
\end{Construction}

\begin{proof}
We only need to show that the minimum distance of $\C'$ is $d=2d_1$.

The generator matrix of the inner  code $\C_{\mathrm{in}}$ is $G_{\mathrm{in}}=\left( \begin{array}{ccc}
            1 &1 &0\\
          1& 0 &1
       \end{array} \right )$,
   then there exists $Q=\left( \begin{array}{ccc}
            0 &1 &0\\
          0& 0 &1
       \end{array}\right),$ such that $QG_{\mathrm{in}}^T=I$. 
    
There exists an isomorphic mapping $g$ from $\mathbb{F}_4=\{0,1,w,w^2\}$ to $\mathbb{F}_2^2$, such that $g(0)=(0,0)$, $g(1)=(1,0)$, $g(w)=(0,1)$, and $g(w^2)=(1,1)$.   
       
For all $i\in [n_1-k_1]$ and $j\in [n_1]$, we observe that
$$
L(\mathbf{h}_{i,j})Q = \begin{pmatrix} \mathbf{0}_2^T \mid g(\mathbf{h}_{i,j})^T \mid g(w\mathbf{h}_{i,j})^T \end{pmatrix}.
$$

  The parity check matrix of $\C$ is $H_{\C}=(\h_1,\h_2,\dots,\h_{n_1})=(h_{i,j})_{i\in [n_1-k_1],j\in [n_1]}$, where $h_{i,j} \in \F_4$, $\h_i^T \in \F_4^{n_1-k_1}$. It is known that  any $d_1-1$  columns of $H_{\C}$  are linearly independent, and there exists a set of $d_1$ columns that are linearly dependent.
   
Let  $\G$ be an isomorphic mapping from $ \F_4^{n_1-k_1}$ to $\F_2^{2(n_1-k_1)}$ such that for any $\x=(x_1,x_2,\dots,x_{n_1-k_1}) \in \F_4^{n_1-k_1}$,
\begin{equation*}
    \G(\x)=(g(x_1),g(x_2),\dots, g(x_{n_1-k_1}))\in \F_2^{2(n_1-k_1)}.
\end{equation*}
   It follows that $\W_{\x}=\{\0, \G(\x), \G(w\x),\G((1+w)\x)\}$ is a two-dimensional subspace of $\F_2^{2(n_1-k_1)}$, where $\x \in \F_4^{n_1-k_1}$. It is straightforward to verify that the parity-check matrix of the concatenated code has the form (\ref{H1}) with $\e_{i,1} = \G(\h_i^T)$ and $\e_{i,2} = \G(w\h_i^T)$.
    
Since $\mathbf{e}_{i,1} = \mathcal{G}(\mathbf{h}_i^T)$ and $\mathbf{e}_{i,2} = \mathcal{G}(w\mathbf{h}_i^T)$, we have
$\mathcal{W}_i = \operatorname{span}\left\{ \mathcal{G}(\mathbf{h}_i^T), \mathcal{G}(w\mathbf{h}_i^T) \right\}$.
For any set of distinct indices $i_1, i_2, \dots, i_{d_1-1} \in [n_1]$, we have $\sum_{j=1}^{d_1-1}\dim(\W_{i_t})=2d_1-2$. We prove this by contradiction. Suppose there exists a set of distinct indices $i_1, i_2, \dots, i_{d_1-1} \in [n_1]$ such that
$\sum_{j=1}^{d_1-1}\dim(\mathcal{W}_{i_j}) < 2d_1 - 2$.
This implies the existence of a binary sequence $\{a_i\}_{i=1}^{2d_1-2}$ with entries in $\mathbb{F}_2$, not all zero, such that
\begin{equation*}
    \sum_{j=1}^{d_1-1} \left( a_{2j-1} \mathcal{G}(\mathbf{h}_{i_j}^T) + a_{2j} \mathcal{G}(w\mathbf{h}_{i_j}^T) \right) = \mathbf{0}_{2(n_1-k_1)}.
\end{equation*}
Applying the inverse map $\mathcal{G}^{-1}$ to both sides, we obtain
\begin{equation*}
    \sum_{j=1}^{d_1-1} \alpha_j \mathbf{h}_{i_j}^T = \mathbf{0}_{n_1-k_1},
\end{equation*}
where each coefficient $\alpha_j \in \mathbb{F}_4$ is uniquely determined by the pair $(a_{2j-1}, a_{2j})$ :
$$
\alpha_j =
\begin{cases}
0, & \text{if } (a_{2j-1}, a_{2j}) = (0,0), \\
1, & \text{if } (a_{2j-1}, a_{2j}) = (1,0), \\
w, & \text{if } (a_{2j-1}, a_{2j}) = (0,1), \\
w^2, & \text{if } (a_{2j-1}, a_{2j}) = (1,1).
\end{cases}
$$
Since the $a_i$ are not all zero, the coefficients $\alpha_j$ are also not all zero. This means the $d_1-1$ columns $\mathbf{h}_{i_1}^T, \dots, \mathbf{h}_{i_{d_1-1}}^T$ of $H_{\mathcal{C}}$ are linearly dependent, which contradicts the fact that any $d_1-1$ columns of $H_{\mathcal{C}}$ are linearly independent.
    
Similarly, since there exist $d_1$ linearly dependent columns $\mathbf{h}_{i_1}, \mathbf{h}_{i_2}, \dots, \mathbf{h}_{i_{d_1}}$ of $H_{\mathcal{C}}$, there exist coefficients $\alpha_1,\alpha_2,\dots,\alpha_{d_1} \in \mathbb{F}_4$ such that $\sum_{j=1}^{d_1}\alpha_j\mathbf{h}_{i_j}^T= \mathbf{0}_{n_1-k_1}$. By applying $\mathcal{G}$ to both sides of this equation, we have 
\begin{equation*}
    \sum_{j=1}^{d_1}\mathcal{G}(\alpha_j\mathbf{h}_{i_j}^T)=\sum_{j=1}^{d_1}a_{2j-1}\mathcal{G}(\mathbf{h}_{i_j}^T)+a_{2j}\mathcal{G}(w\mathbf{h}_{i_j}^T)=\mathbf{0}_{2(n_1-k_1)},
\end{equation*}
where the sequence $\{a_i\}_{i\in [2d_1]}$ with $a_i \in \mathbb{F}_2$ is defined as follows: for $j\in [d_1]$, $(a_{2j-1},a_{2j})=(0,0)$ when $\alpha_j=0$, $(a_{2j-1},a_{2j})=(1,0)$ when $\alpha_j=1$, $(a_{2j-1},a_{2j})=(0,1)$ when $\alpha_j=w$, and $(a_{2j-1},a_{2j})=(1,1)$ when $\alpha_j=w^2$. Then $\sum_{j=1}^{d_1}\dim(\mathcal{W}_{i_j})\leq 2d_1$.

  By Theorem \ref{thm28}, the minimum distance of concatenated code is $2d_1$.

\end{proof}

\begin{Corollary}\label{weight}
The relationship between the weight distributions of $\C$ and $\C'$ is as follows: $A_0(\C')=A_0(\C)=1$, $A_i(\C')=0$ for $i>2n_1$, $A_{2j}(\C')=A_j(\C)$ and $A_{2j-1}(\C')=0$ for $j=1,2,\dots,n_1$.
\end{Corollary}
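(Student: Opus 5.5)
The plan is to compare weights codeword by codeword, using the defining map of the concatenation. The map $\c=(c_1,\dots,c_{n_1})\mapsto \c'=(f(c_1)|\dots|f(c_{n_1}))$ is a bijection from $\C$ onto $\C'$: it is $\F_2$-linear, and it is injective because $f$ is one-to-one. So it suffices to prove that $\mathrm{wt}(\c')=2\,\mathrm{wt}(\c)$ for every $\c\in\C$. Given that identity, the number of codewords of $\C'$ of weight $2j$ equals the number of codewords of $\C$ of weight $j$, so $A_{2j}(\C')=A_j(\C)$. No codeword of $\C'$ then has odd weight, which gives $A_{2j-1}(\C')=0$. Finally, $\C$ has no weight exceeding $n_1$, so $\C'$ has no weight exceeding $2n_1$, and $A_0(\C')=A_0(\C)=1$ because only the zero vector has weight $0$.

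The core step is a blockwise computation. Since $f$ is an injective linear map with $f(0)=\0_3$, we have $\mathrm{wt}(f(c_i))=0$ exactly when $c_i=0$. For $c_i\neq 0$, the block $f(c_i)$ is a nonzero codeword of the inner $[3,2,2]$ code with parity-check matrix $(1,1,1)$. That code is the even-weight code $\{000,110,101,011\}$, and each of its nonzero codewords has weight exactly $2$. This can also be read off from $G_{\mathrm{in}}$ and the mapping $g$: writing $c_i=\Omega\x^T$, the block is $f(c_i)=\x G_{\mathrm{in}}\in\{(1,1,0),(1,0,1),(0,1,1)\}$. Summing over the $n_1$ blocks gives $\mathrm{wt}(\c')=\sum_i \mathrm{wt}(f(c_i))=2\,|\mathrm{supp}(\c)|=2\,\mathrm{wt}(\c)$.

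There is no real obstacle in this argument. The only point to state carefully is that the map $\C\to\C'$ is a bijection, so that the codeword counts transfer exactly.
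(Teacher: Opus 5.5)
Your proposal is correct and follows essentially the same route as the paper: every nonzero symbol of $\F_4$ maps under $f$ to a weight-$2$ codeword of the $[3,2,2]$ inner code, so $\mathrm{wt}(\c')=2\,\mathrm{wt}(\c)$, and the weight counts transfer. You also state explicitly that $\c\mapsto\c'$ is a bijection onto $\C'$; the paper leaves this implicit, and it is the step that makes the counting exact.
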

\begin{proof}
 For any nonzero codeword $\c_{\mathrm{in}}\in \C_{\mathrm{in}}$, we have $\mathrm{wt}(\c_{\mathrm{in}})=2$. Via the one-to-one mapping $f:  \F_{q^k} \rightarrow \C_{\mathrm{in}}$ with  $f(0)=\0_3$, for any nonzero $\alpha \in \F_4$, we obtain $\mathrm{wt}(f(\alpha))=2$. 

For any codeword $\c = (c_1, c_2, \dots, c_{n_1}) \in \C$ , there exists a corresponding codeword $\c'=(f(c_1)| f(c_2)| \dots | f(c_{n_1})) \in \C'$.  If $\mathrm{wt}(\c) = j$ and $\mathrm{supp}(\c) =\{i_1, i_2,\dots, i_j\}$, i.e.,  $c_{i_j}\neq 0$ for all $t\in [j]$ with $c_{i_t} \in \F_4$, then $\mathrm{wt}(f(c_{i_t}))=2$, which implies that $\mathrm{wt}(\c')=2j$. 
\end{proof}

For any basis of $\mathbb{F}_4$ different from the reference basis  $\{1,\omega\}$ adopted in Construction \ref{constr}, the corresponding 2-dimensional subspace $\mathcal{W}_i$ of each local repair group and the weight distribution of $\C'$ remain invariant under basis change. For example, take the basis $\{1, \omega^2\}$ of $\mathbb{F}_4$ over $\mathbb{F}_2$, compared to the  basis $\{1, \omega\}$ selected in Construction \ref{constr}. The corresponding vectors under the two bases are $\e_{i,1} = \mathcal{G}(\h_i^\mathsf{T})$, $\e_{i,2} = \mathcal{G}(\omega \h_i^\mathsf{T})$ (for $\{1,\omega\}$) and $\e_{i,1}' = \mathcal{G}(\h_i^\mathsf{T})$, $\e_{i,2}' = \mathcal{G}(\omega^2 \h_i^\mathsf{T})$ (for $\{1, \omega^2\}$), respectively. 

In \cite{concatenated}, for $r>2$ and a general outer code over $\F_{2^r}$, the minimum distance and weight distribution of the corresponding concatenated code are notoriously difficult to characterize explicitly.

\section{Construction of optimal binary LRCs attaining the Griesmer-like bound}

In this section, we demonstrate that certain classes of optimal linear codes over $\F_4$ achieving the classical Griesmer bound can be used to construct optimal binary LRCs attaining the Griesmer-like bound.

The classical Griesmer bound for an $[n, k, d]$ linear code over $\F_q$ can be seen as an extension of the Singleton bound, as follows:
\begin{equation}
    n\geq \sum_{i=0}^{k-1}\left \lceil   \frac{d}{q^i}\right \rceil.
\end{equation}

By extending the classical Griesmer bound to the $[n, k, d; r]$ LRCs over $\F_q$, Hao \emph{et al.} \cite{Hao} derived a Griesmer-like bound, as follows:
\begin{equation}\label{gri}
    n \geq \max_{1\leq \tau\leq \left \lceil   \frac{k}{r}\right \rceil-1}\{\tau(r+1)+\sum_{i=0}^{k-r\tau-1}\left \lceil   \frac{d}{q^i}\right \rceil\}.
\end{equation}

\begin{Theorem}
 In Construction \ref{constr}, let the outer code $\C$  be an $[n_1,k_1,d_1=n_1-k_1+1]$ MDS code   with $k_1<n_1$  over $\F_4$, then  the concatenated  code $\C'$ is an $[n=3n_1, k=2k_1, d=2(n_1-k_1+1); 2]$ optimal binary LRC that attains  the Griesmer-like bound.    
\end{Theorem}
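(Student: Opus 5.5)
The plan is to first read off the parameters from Construction \ref{constr}, and then verify optimality by evaluating the Griesmer-like bound \eqref{gri} at one well-chosen value of $\tau$.

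\textbf{Parameters.} Since $\C$ is an $[n_1,k_1,d_1]$ code over $\F_4$ with $d_1=n_1-k_1+1$, Construction \ref{constr} gives that $\C'$ is an $[3n_1,2k_1,2(n_1-k_1+1);2]$ binary LRC. No further work is needed here.

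\textbf{Attaining the bound.} Put $k=2k_1$, $r=2$, $q=2$ and $d=2(n_1-k_1+1)$. The largest admissible value in \eqref{gri} is $\tau=\lceil k/r\rceil-1=k_1-1$, which is valid when $k_1\ge 2$. For this $\tau$ we have $k-r\tau=2$, so the sum has only the two terms $i=0,1$. This gives
\[
\tau(r+1)+\left\lceil d\right\rceil+\left\lceil \tfrac{d}{2}\right\rceil
=3(k_1-1)+2(n_1-k_1+1)+(n_1-k_1+1)=3n_1 .
\]
Hence $n=3n_1$ meets \eqref{gri} with equality. If $k_1=1$, the range of $\tau$ in \eqref{gri} is empty. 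In that case I would use the classical Griesmer bound, which is the $\tau=0$ term, for a binary code of dimension $2$. It gives $n\ge d+d/2=3n_1$, so equality holds here as well.

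\textbf{Optimality.} Optimality means that no binary $[3n_1,2k_1,d';2]$ LRC exists with $d'>d$. Suppose one did, and apply \eqref{gri} with the same $\tau$ (or the Griesmer bound when $k_1=1$). Then $d'\ge d+1$ and $\lceil d'/2\rceil\ge \lceil (d+1)/2\rceil=n_1-k_1+2$. The right-hand side is therefore at least $3n_1+2>3n_1=n$, which is a contradiction.

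The only delicate step is the boundary case $k_1=1$, where the range of $\tau$ in \eqref{gri} is empty. I expect to handle it by appealing to the classical Griesmer bound, which every binary linear code satisfies, LRC or not. Everything else is a direct computation.
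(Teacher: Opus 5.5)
Your proposal is correct and follows the paper's proof: both evaluate the Griesmer-like bound at $\tau=k_1-1$, where the sum reduces to $2d_1+d_1$ and the right-hand side equals $3(k_1-1)+3d_1=3n_1$. Your separate treatment of $k_1=1$ and your explicit check that no $d'>d$ is possible are sensible refinements that the paper leaves implicit; in particular, for $k_1=1$ the paper's choice $\tau_1=0$ lies outside the stated range $1\le\tau$ and really amounts to the classical Griesmer bound, as you note.
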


\begin{proof}
    By the Griesmer-like bound (\ref{gri}) and $k_1<n_1$, take $\tau_1=k_1-1$, we can verify that 
    \begin{equation*}
    \begin{aligned}
         3n_1=n&\geq\tau_1(r+1)+\sum_{i=0}^{k-\tau_1 r-1}\left \lceil \frac{2d_1}{2^i}\right \rceil=3(k_1-1)+3d_1=3n_1.
    \end{aligned}
    \end{equation*}
 \end{proof}
Meanwhile, if $\C$ is a trivial MDS code with  parameters $[n_1,k_1=n_1,d_1=1]$, then $\C'$ is an optimal LRC that achieves the Singleton-like bound (\ref{s1}) since $2=n-k-\left \lceil   \frac{k}{r}\right \rceil+2=3n_1-3k_1+2$.

 The set of feasible parameters for non-trivial MDS codes over $\F_4$ is highly constrained. To clarify the complete enumeration of such codes, we provide a table \ref{t42} listing all non-trivial MDS codes over $\mathbb{F}_4$, together with the parameters of their corresponding   optimal binary LRCs.

\begin{table}[ht]
\centering
\caption{Non-trivial MDS codes over $\mathbb{F}_4$ and corresponding  optimal binary LRCs }
\label{tab:mds_lrc}
\begin{tabular}{ll}
\hline
The non-trivial MDS codes over $\mathbb{F}_4$ & The corresponding optimal binary LRCs  \\
\hline
$[n_1=4,k_1=2,d_1=3]$ & $[n=12,k=4,d=6;r=2]$ \\
$[n_1=5,k_1=2,d_1=4]$ & $[n=15,k=4,d=8;r=2]$ \\
$[n_1=5,k_1=3,d_1=3]$ & $[n=15,k=6,d=6;r=2]$ \\
$[n_1=6,k_1=3,d_1=4]$ & $[n=18,k=6,d=8;r=2]$ \\
\hline
\end{tabular}
\end{table}

\begin{Theorem}\label{t42}
    Let $4^{l-1}<d_1=t4^{l-1}-m\leq 4^l$, where $l>1$ is a positive integer, $t=2,3,4$, and $m=\sum_{i=0}^{l-2}a_i4^i$ with each $a_i \in \{0,1\}$.
In Construction \ref{constr}, let the outer code $\C$  be  a Griesmer code with parameters $[n_1,k_1,d_1]$ over $\F_4$,   then  the concatenated code $\C'$ is an $[n=3n_1, k=2k_1, d=2d_1;2]$ optimal binary LRC  that attains the Griesmer-like bound.
\end{Theorem}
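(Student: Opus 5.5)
The plan is to combine Construction \ref{constr} with a direct evaluation of the Griesmer-like bound \eqref{gri}, taking $q=2$ and $r=2$. By Construction \ref{constr}, $\C'$ is a binary $[3n_1,2k_1,2d_1;2]$ LRC, so only optimality has to be shown. For that I will exhibit one admissible $\tau$ for which the right-hand side of \eqref{gri} equals $3n_1$ exactly. Since $\C$ is a Griesmer code over $\F_4$, we have $n_1=\sum_{j=0}^{k_1-1}\lceil d_1/4^j\rceil$. From $4^{l-1}<d_1\le 4^l$ we get $\lceil d_1/4^j\rceil=1$ for every $j\ge l$, hence
\[
n_1=\sum_{j=0}^{l-1}\left\lceil \frac{d_1}{4^j}\right\rceil+(k_1-l).
\]

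Next I will choose $\tau=k_1-l$, so that $k-r\tau=2l$ and the binary sum in \eqref{gri} has exactly $2l$ terms. I group these terms in consecutive pairs $i=2j,2j+1$ for $j=0,\dots,l-1$. The pair contributes
\[
\left\lceil\frac{2d_1}{2^{2j}}\right\rceil+\left\lceil\frac{2d_1}{2^{2j+1}}\right\rceil=\left\lceil\frac{2d_1}{4^j}\right\rceil+\left\lceil\frac{d_1}{4^j}\right\rceil .
\]
The key step is to prove $\lceil 2x\rceil=2\lceil x\rceil$ for $x=d_1/4^j$ with $0\le j\le l-1$. This identity holds exactly when the fractional part of $x$ is either $0$ or greater than $1/2$.

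This is where the arithmetic hypothesis on $d_1$ enters. For $j\le l-1$ we have $d_1\equiv -m \pmod{4^j}$. The residue $m \bmod 4^j=\sum_{i<j}a_i4^i$ has base-$4$ digits in $\{0,1\}$, so it is at most $(4^j-1)/3<4^j/2$. Therefore $d_1 \bmod 4^j$ is either $0$ or equal to $4^j-(m\bmod 4^j)>4^j/2$, as required. It follows that each pair sums to $3\lceil d_1/4^j\rceil$, and
\[
3\tau+\sum_{i=0}^{2l-1}\left\lceil\frac{2d_1}{2^i}\right\rceil=3(k_1-l)+3\sum_{j=0}^{l-1}\left\lceil\frac{d_1}{4^j}\right\rceil=3n_1=n .
\]
So \eqref{gri} holds with equality, and $\C'$ attains the Griesmer-like bound.

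The main obstacle I expect is the admissibility of $\tau$, namely $1\le k_1-l\le\lceil k/r\rceil-1=k_1-1$. The upper inequality is automatic because $l>1$. The lower one needs $k_1\ge l+1$. I would first try to extract this from the existence theory of Griesmer codes with these $d_1$ (Belov/Hamada-type constructions, which require $k_1$ large relative to $l$). If that is not automatic, I would treat it as an implicit hypothesis of the theorem. Note that smaller choices of $\tau$ cannot help: for $k_1-\tau<l$ the bound falls short of $3n_1$, since some terms $\lceil d_1/4^j\rceil>1$ with $j\ge k_1-\tau$ are then replaced by $1$. So the case $k_1\le l$ genuinely has to be excluded rather than handled by another choice of $\tau$.
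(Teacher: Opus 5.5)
Your proposal is correct and is essentially the paper's proof. It uses the same choice $\tau=k_1-l$, the same pairing of the $2l$ binary terms, and the same identity $\lceil 2d_1/4^j\rceil=2\lceil d_1/4^j\rceil$ for $0\le j\le l-1$; the paper derives it via $2\lfloor m/4^j\rfloor=\lfloor 2m/4^j\rfloor$, which is equivalent to your fractional-part argument.

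The admissibility issue you flag is real, and the paper skips it (its formula for $n_1$ already needs $k_1\ge l$). However, $k_1\ge l+1$ cannot come from existence theory: the paper's own outer codes $[9,2,7]_4$ and $[10,2,8]_4$ have $k_1=l=2$. So $k_1\ge l+1$ must be added as a hypothesis. For $k_1\le l$, your pairing argument instead shows that $\C'$ meets the classical Griesmer bound, i.e.\ the case $\tau=0$.
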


\begin{proof}
  Since $4^{l-1}<d_1\leq 4^l$, we obtain $\left\lceil \frac{d_1}{4^i}\right\rceil=1$ for $i\geq l$. It follows that  
\begin{equation*}
    n_1=\sum_{i=0}^{k_1-1}\left\lceil \frac{d_1}{4^i}\right\rceil=k_1-l+\sum_{i=0}^{l-1}\left\lceil \frac{d_1}{4^i}\right\rceil.
\end{equation*}
Furthermore, since $m=\sum_{i=0}^{l-2}a_i4^i$ with each $a_i \in \{0,1\}$, we have
\begin{equation*}
    2\left\lfloor \frac{m}{4^i} \right\rfloor= \left\lfloor \frac{2m}{4^i} \right\rfloor \quad \text{for all} \quad 0\leq i \leq l-1,
\end{equation*}
which in turn implies that, for all $0\leq i \leq l-1$,
\begin{equation}\label{A}
    2\left\lceil \frac{d_1}{4^i}\right\rceil=\left\lceil \frac{2d_1}{4^i}\right\rceil.
\end{equation}

By the Griesmer-like bound \eqref{gri}, setting $\tau_3= k_1-l$, we then have
\begin{equation*}
\begin{aligned}
    n=3n_1&=3(k_1-l)+3\sum_{i=0}^{l-1}\left\lceil \frac{d_1}{4^i}\right\rceil\\
    &\overset{\text{(A)}}{=} 3(k_1-l)+\sum_{i=0}^{l-1}\left(\left\lceil \frac{2d_1}{4^i}\right\rceil+\left\lceil \frac{d_1}{4^i}\right\rceil\right)\\
    &=3(k_1-l)+\sum_{i=0}^{2l-1}\left\lceil \frac{2d_1}{2^i}\right\rceil \\
    &=3\tau_3+\sum_{i=0}^{k_1-2\tau_3 -1}\left\lceil \frac{2d_1}{2^i}\right\rceil,  
\end{aligned}
\end{equation*}
where the equality (A) follows from Equation \eqref{A}.

In summary, $\C'$ is an $[n=3n_1, k=2k_1, d=2d_1;2]$ optimal binary LRC  that attains the Griesmer-like bound.
\end{proof}

Moreover, if $k_1 = l$, then $\C'$ is also a Griesmer code since $n=\sum_{i=0}^{2k_1-1}\left \lceil \frac{2\cdot d_1}{2^i}\right \rceil=3\cdot \sum_{i=0}^{k_1-1}\left \lceil \frac{d_1}{4^i}\right \rceil=3n_1$.

\begin{Example}\label{d=7,8}
    For $k_1 = l=2$, there exist Griesmer  codes $\C_1$ with parameters $[9,2,7]$ and $\C_2$ with  parameters $[10,2,8]$ over $\F_4$ in Code Tables \cite{codebase}. In Construction \ref{constr}, let $\C_1$ and $\C_2$ be the outer codes, yielding  binary LRCs $\C_1'$ and $\C_2'$ with parameters $[27,4,14;2]$ and $[30,4,16;2]$, respectively. Notably, these codes are optimal binary LRCs that simultaneously achieve both the  Griesmer-like bound and the classical Griesmer bound.
\end{Example}

\begin{Example}
For $4=k_1>l=2$, there exist Griesmer  codes $\C_3$ with parameters $[16,4,11]$ and $\C_4$ with parameters $[17,4,12]$ over $\F_4$ in Code Tables \cite{codebase}. In Construction \ref{constr}, let $\C_3$ and $\C_4$ be the outer codes. Then the concatenated codes are binary LRCs $\C_3'$ with parameters $[48,8,22;2]$ and $\C_4'$ with parameters $[51,8,24;2]$. These codes are optimal binary LRCs that achieve the Griesmer-like bound but do not attain the classical Griesmer bound.
\end{Example}

Numerous constructions of linear codes are known to attain the Griesmer bound, and their minimum distances satisfy the condition given in Theorem \ref{t42}.

\begin{Example}[The generalized MacDonald codes \cite{Dodunekov}]
    For any  $m$ and $1 \leq  u \leq m-1$, the generalized MacDonald codes over $\F_4$  have parameters $[n_1=\frac{t4^m-4^u}{3},\, k_1=m,\, d_1=t4^{m-1}-4^{u-1}]$ meeting the Griesmer bound, where $t=1,2,3,4$. In Construction \ref{constr}, taking the generalized MacDonald codes as the outer codes yields binary LRCs with parameters $[n=t2^{2m}-2^{2u},\, k=2m,\, d=t2^{2m-1}-2^{2u-1};\, r=2]$, which achieve the Griesmer-like bound.  When $t=1$, the MacDonald codes in \cite{Patel}  also satisfy this conclusion since $4^{m-2}<4^{m-1}-4^{u-1}=4\cdot 4^{m-2}-4^{u-1}<4^{m-1}$.
\end{Example}

\begin{Example}[The Solomon and Stiffler codes \cite{SOLOMON}]
Let $S_{t,4}$ be a $t \times (4^t-1)/3$ matrix whose columns correspond to all distinct points in $\mathrm{PG}(t-1,4)$.
Let $F = \bigcup_{i=1}^{h} \mathcal{U}_i$ be a union of disjoint projective subspaces, where each $\mathcal{U}_i$ has dimension $u_i-1$ with $t > u_1 \geq u_2 \geq \dots \geq u_h \geq 1$, and at most three of the subspaces have the same dimension.
The Solomon--Stiffler codes over $\mathbb{F}_4$ with generator matrix $G$, which is obtained by deleting from $S_{t,4}$ all columns corresponding to points in $F$, have parameters $[ n_1 = \frac{4^t-1}{3} - \sum_{i=1}^{h} \frac{4^{u_i}-1}{3},\; k_1 = t,\; d_1 = 4^{t-1} - \sum_{i=1}^{h} 4^{u_i-1}]$ and attain the Griesmer bound.
In Construction~\ref{constr}, let the Solomon--Stiffler codes be used as the outer codes.
The resulting binary LRCs have parameters $[ n=2^{2t}-1-\sum_{i=1}^{h}\left(2^{2u_i}-1\right),\, k=2t,\,d=2^{2t-1}-\sum_{i=1}^{h}2^{2u_i-1};\,r=2]$ and achieve the Griesmer-like bound.
\end{Example}

\section{Construction of optimal binary LRCs attaining the sphere-packing-like bound}
In this section,  we demonstrate that the optimal linear codes over $\F_4$ achieving the classical sphere-packing bound   can be used to construct optimal binary LRCs attaining the sphere-packing-like bound.
 
 The classic sphere-packing bound for $[n,k,d]$ linear codes over $\F_4$ is 
\begin{equation}\label{cp}
    4^k\leq \frac{4^n}{\mathcal{O}_d},
\end{equation}
where $\mathcal{O}_d=\sum_{i=0}^{\lfloor \frac{d-1}{2}\rfloor}\binom{n}{i}3^i$.

Since $k$ must be an integer, taking logarithms of \eqref{cp} leads to the following bound:
\begin{equation}\label{cp1}
    k\leq  n-\left \lceil \log_4\mathcal{O}_d \right \rceil.
\end{equation}

\begin{Definition}
     An $[n,k,d]$ linear code over $\F_4$ is called a perfect code if it meets the bound in (\ref{cp}); it is called a $k$-optimal code with respect to the sphere-packing bound   if it meets the bound in (\ref{cp1}).
\end{Definition}

For any $[n, k, d;r]$ LRC $\C$ with disjoint local repair groups, Wang \emph{et al.}  \cite{Wang} proposed a sphere-packing approach for deriving an upper bound. Specifically, for any $[n=3\ell,k,d=2t+2;r=2]$ binary LRCs with disjoint local repair groups, the sphere-packing bound is given as  follows:
  \begin{equation}\label{perf}
        2^k\leq \frac{2^{\frac{2n}{3}}}{\Omega_d},
    \end{equation}
    where $\Omega_d=\sum_{0\leq i_1+i_2+\dots+i_{\ell}\leq \lfloor \frac{d-1}{4}\rfloor\ }\prod_{j=1}^{\ell}\binom{3}{2i_j}$.

Since $k$ must be an integer, taking logarithms of \eqref{perf} leads to the following bound:
\begin{equation}\label{perf1}
    k\leq   \frac{2n}{3}-\left \lceil \log_2\Omega_d \right \rceil. 
\end{equation}

\begin{Definition}[\cite{Fang}]
     An $[n=3\ell,k,d=2t+2;r=2]$ binary LRC with disjoint local repair groups is called a perfect LRC if it meets the bound in (\ref{perf}); it is called a $k$-optimal LRC with respect to the sphere-packing-like bound if it meets the bound in (\ref{perf1}).
\end{Definition}
 
\begin{Theorem}\label{conper}
In Construction \ref{constr}, let the outer code $\C$  be  an $[n_1,k_1,d_1]$  perfect  code over $\F_4$, then  the concatenated code  $\C'$  is an $[n=3n_1,k=2k_1,d=2d_1;r=2]$ binary  perfect LRC.   
\end{Theorem}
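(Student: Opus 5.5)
The plan is to reduce the claim directly to the equality case of the classical sphere-packing bound \eqref{cp} for $\C$. By Construction \ref{constr}, $\C'$ is an $[n=3n_1,k=2k_1,d=2d_1;2]$ binary LRC with $\ell=n_1$ disjoint local repair groups. It therefore suffices to show that $\Omega_{2d_1}$ for $\C'$ equals $\mathcal{O}_{d_1}$ for $\C$. Granting this, we get
\[
\frac{2^{2n/3}}{\Omega_d}=\frac{2^{2n_1}}{\mathcal{O}_{d_1}}=\frac{4^{n_1}}{\mathcal{O}_{d_1}}=4^{k_1}=2^{k},
\]
which is exactly equality in \eqref{perf}.

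\emph{Parity of $d_1$.} First I will argue that a perfect code over $\F_4$ has odd minimum distance, say $d_1=2e+1$. Suppose instead $d_1=2e+2$ with $d_1\ge 2$. The balls of radius $e=\lfloor (d_1-1)/2\rfloor$ around codewords would then have to cover $\F_4^{n_1}$ exactly, since equality holds in \eqref{cp}. Now take two codewords $\c_1,\c_2$ at distance $d_1$, and change $e+1$ of the coordinates where they differ from $\c_1$'s values to $\c_2$'s values. The resulting word is at distance $e+1$ from $\c_1$ and from $\c_2$. It is also at distance at least $d_1-(e+1)=e+1$ from every other codeword, by the triangle inequality. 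So it lies in no ball, a contradiction. The degenerate cases $d_1=1$ and $k_1=0$ I will handle separately or exclude; in particular $d_1=1$ forces $e=0$ and $\mathcal{O}=1$, which is still consistent with the computation below.

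\emph{Computing $\Omega_d$.} With $d=2d_1=4e+2$ we have $\lfloor (d-1)/4\rfloor=\lfloor (4e+1)/4\rfloor=e$. In
\[
\Omega_d=\sum_{i_1+\dots+i_{n_1}\le e}\ \prod_{j=1}^{n_1}\binom{3}{2i_j},
\]
every product with some $i_j\ge 2$ vanishes, because $\binom{3}{4}=\binom{3}{6}=0$. So only tuples with $i_j\in\{0,1\}$ contribute, and each such tuple contributes $3^{\#\{j:i_j=1\}}$, since $\binom30=1$ and $\binom32=3$. Grouping the tuples by the number $i$ of ones gives
\[
\Omega_d=\sum_{i=0}^{e}\binom{n_1}{i}3^i=\mathcal{O}_{d_1}.
\]
This matches the intuition that each nonzero symbol of $\F_4$ maps to a weight-2 inner codeword, consistent with Corollary \ref{weight}.

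The main obstacle, mild as it is, is the parity step. Without it, an even $d_1$ would give $\lfloor(2d_1-1)/4\rfloor$ differing from $\lfloor(d_1-1)/2\rfloor$, and the identification $\Omega_d=\mathcal{O}_{d_1}$ would fail. Everything else is the direct counting above.
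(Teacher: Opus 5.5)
Your argument is correct and is essentially the paper's proof. Like the paper, you identify $2^k=4^{k_1}$, $2^{2n/3}=4^{n_1}$ and $\Omega_{2d_1}=\mathcal{O}_{d_1}$ by discarding every tuple with some $i_j\ge 2$ (since $\binom{3}{4}=\binom{3}{6}=0$) and grouping the rest by the number of $i_j=1$.

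Your parity lemma is valid but unnecessary. Your claim that an even $d_1$ would break the identification is false: for every $d_1\ge 1$ one has $\lfloor (2d_1-1)/4\rfloor=\lfloor \lfloor (2d_1-1)/2\rfloor/2\rfloor=\lfloor (d_1-1)/2\rfloor$. For example, $d_1=2e+2$ gives $e$ on both sides. This is exactly the paper's step $(\mathbf{B_1})$, so the counting works without knowing that $d_1$ is odd.
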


\begin{proof}
It is easy to verify that $2^k = 4^{k_1}$ and $2^{\frac{2n}{3}} = 4^{n_1}$.

\begin{equation}\label{eq:omega}
\begin{aligned}
    \Omega_d = \Omega_{2d_1}
    &= \sum_{0 \leq i_1 + i_2 + \dots + i_\ell \leq \lfloor \frac{2d_1-1}{4} \rfloor} \; \prod_{j=1}^{\ell} \binom{3}{2i_j} \\
    &\stackrel{(\mathbf{B_1})}{=} \sum_{0 \leq i_1 + i_2 + \dots + i_\ell \leq \lfloor \frac{d_1-1}{2} \rfloor} \; \prod_{j=1}^{\ell} \binom{3}{2i_j} \\
    &\stackrel{(\mathbf{B_2})}{=} \sum_{i=0}^{\lfloor \frac{d_1-1}{2} \rfloor} \binom{n_1}{i} 3^i = \mathcal{O}_{d_1},
\end{aligned}
\end{equation}
where the equality ($\mathbf{B}_1$) follows from the fact that  $d-1 = 2d_1-1$ is odd,   and the equality ($\mathbf{B}_2$) holds since  $n_1 = \ell$,  $\binom{3}{0}=1$, $\binom{3}{2}=3$ and $\binom{3}{2i_j}=0$ for all $i_j\geq 2$.

Therefore, if $\mathcal{C}$ is a perfect code, then $\mathcal{C}'$ is a binary perfect LRC.
\end{proof}

The well-known perfect codes include the Hamming codes.
\begin{Corollary}
    Let the outer code $\C$ be a Hamming code over $\F_4$ with parameters $[\frac{4^t-1}{3},\ \frac{4^t-1}{3}-t,\ 3]$, then the weight distribution of the associated perfect binary LRC $\C'$ with parameters $[2^{2t}-1,\frac{2(2^{2t}-1)}{3}-2t,6;2]$, is given as follows:
    \begin{equation}
    \begin{aligned}
       & A_0(\C')=1;\\
       & \textbf{for}\quad  j=1,2,\dots,\frac{2^{2t}-1}{3}, \\
        &A_{2j-1}(\C')=0;\\
        & A_{2j}(\C')=\frac{1}{2^{2t}}(3^j\binom{\frac{2^{2t}-1}{3}}{j}+(2^{2t}-1)\sum_{a=0}^{j}(-1)^a3^{j-a}\binom{2^{2(t-1)}}{a}\binom{\frac{2^{2(t-1)}-1}{3}}{j-a});\\
       &A_i(\C')=0, \quad i>\frac{2(2^{2t}-1)}{3}.
    \end{aligned}
    \end{equation}
\end{Corollary}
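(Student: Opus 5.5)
The plan is to combine Corollary \ref{weight} with the known weight distribution of the quaternary Hamming code. Corollary \ref{weight} gives $A_0(\C')=1$, $A_{2j-1}(\C')=0$, $A_{2j}(\C')=A_j(\C)$, and $A_i(\C')=0$ for $i>2n_1$. With $n_1=\frac{4^t-1}{3}=\frac{2^{2t}-1}{3}$ we have $2n_1=\frac{2(2^{2t}-1)}{3}$. So everything reduces to computing $A_j(\C)$ for the Hamming code $\C$ over $\F_4$. The parameters of $\C'$ follow directly from Construction \ref{constr}: $n=3n_1$, $k=2k_1$, $d=2\cdot 3=6$. Perfectness is Theorem \ref{conper}.

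To compute $A_j(\C)$, I would use the MacWilliams identity through the dual code. The dual of the $[\frac{4^t-1}{3},\frac{4^t-1}{3}-t,3]$ Hamming code is the simplex code $\C^\perp$ of dimension $t$. Its generator matrix has as columns representatives of all points of $PG(t-1,4)$. A nonzero codeword $\mathbf{a}H$ vanishes exactly on the columns lying in the hyperplane $\mathbf{a}^\perp$. That hyperplane contains $\frac{4^{t-1}-1}{3}$ points, so every nonzero codeword has weight
\[ n_1-\frac{4^{t-1}-1}{3}=4^{t-1}=2^{2(t-1)}. \]
Hence $\C^\perp$ has weight enumerator $W_{\C^\perp}(x,y)=x^{n_1}+(4^t-1)x^{n_1-4^{t-1}}y^{4^{t-1}}$.

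Next, apply MacWilliams over $\F_4$:
\[ W_{\C}(x,y)=\frac{1}{4^{t}}W_{\C^\perp}(x+3y,\,x-y). \]
The first term gives $(x+3y)^{n_1}$, whose $y^j$ coefficient is $3^j\binom{n_1}{j}$. The second term gives $(4^t-1)(x+3y)^{n_1-4^{t-1}}(x-y)^{4^{t-1}}$. Here $n_1-4^{t-1}=\frac{4^{t-1}-1}{3}=\frac{2^{2(t-1)}-1}{3}$. Expanding this product and collecting the $y^j$ coefficient gives $\sum_{a=0}^{j}(-1)^a\binom{2^{2(t-1)}}{a}3^{j-a}\binom{\frac{2^{2(t-1)}-1}{3}}{j-a}$. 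Dividing by $4^t=2^{2t}$ reproduces exactly the stated formula for $A_{2j}(\C')$.

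The only real step is the simplex-code weight computation, and it is standard. The MacWilliams expansion is routine bookkeeping, where the main care needed is matching the binomial index ranges.
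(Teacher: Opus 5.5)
Your proposal is correct and follows the paper's proof: both combine Corollary \ref{weight} with the MacWilliams identity applied to the dual simplex code, whose nonzero codewords all have weight $4^{t-1}$. The paper writes MacWilliams in Krawtchouk form, while you expand $W_{\C^\perp}(x+3y,\,x-y)$ directly. This is the same computation, and your hyperplane count also justifies the simplex weight that the paper takes for granted.
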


\begin{proof}

Using the MacWilliams equations in Krawtchouk form in \cite{MacWilliams}
    \begin{equation*}
        A_j(\C)=\frac{1}{|\C^{\bot}|}\sum_{i=0}^{n}A_i(\C^{\bot})\mathcal{K}_j(i;n;q)
    \end{equation*}
   where the  Krawtchouk polynomial $\mathcal{K}_j(i;n;q)=\sum_{a=0}^j(-1)^a(q-1)^{j-a}\binom{i}{a}\binom{n-i}{j-a}$.
   
The weight distribution of $[\frac{4^t-1}{3},\frac{4^t-1}{3}-t,3]$ Hamming code $\C$ over $\F_4$ is
    \begin{equation}
    \begin{aligned}
        &A_0(\C)=1,\quad A_1(\C)=A_2(\C)=0;\\
        & \textbf{for}\quad  j\geq 3,\\
        &A_j(\C)=\frac{1}{|\C^{\bot}|}\mathcal{K}_j(0;\frac{4^t-1}{3};4)+\frac{1}{|\C^{\bot}|}(4^t-1)\mathcal{K}_j(4^{t-1};\frac{4^t-1}{3};4)\\
        &=\frac{1}{4^t}(3^j\binom{\frac{4^t-1}{3}}{j}+(4^t-1)\sum_{a=0}^j(-1)^a3^{j-a}\binom{4^{t-1}}{a}\binom{\frac{4^{t-1}-1}{3}}{j-a}). 
        \end{aligned}
    \end{equation}
 According to Corollary \ref{weight}, the weight distribution of $\C'$ takes the form given above.
\end{proof}

 If $\C$ is a Hamming code over $\F_4$, the corresponding concatenated codes is a perfect binary LRC $\C'$, which can be  regarded as  a special instance of the class of perfect LRCs constructed  by Fang \emph{et al.}  in \cite{Fang}.  Although Fang \emph{et al.} constructed a general class of  $q$-ary LRCs with $q\geq 2$, $r=2$ and $d=5$, when restricted to the  binary case, the resulting codes inherently have minimum distance $6$. In our case, we can provide their weight distributions. 

 \begin{Example}\label{hanmming}
There exists a Hamming  code $\mathcal{C}$ over $\mathbb{F}_4$ with parameters $[5, 3, 3]$, whose parity-check matrix is given by:
\begin{equation*}
     \begin{pmatrix}
     1& 0 &1 & 1& 1\\
     0&1&1& w&w^2\\
     \end{pmatrix},
\end{equation*}
where $w$ is the primitive element of $\F_4$.
In Construction \ref{constr}, taking the Hamming  code $\mathcal{C}$ as the outer code yields a binary locally repairable code $\mathcal{C}'$ with parameters $[15, 6, 6;2]$, whose parity-check matrix is given by:
\begin{equation*}
\left(
\begin{array}{ccc|ccc|ccc|ccc|ccc}
1&1&1 & & & & & & & & & & & & \\
& & & 1&1&1 & & & & & & & & & \\
& & & & & & 1&1&1 & & & & & & \\
& & & & & & & & & 1&1&1 & & & \\
& & & & & & & & & & & & 1&1&1 \\
\hline
0&1&0& 0&0&0& 0&1&0& 0&1&0& 0&1&0 \\
0&0&1& 0&0&0& 0&0&1& 0&0&1& 0&0&1 \\
0&0&0& 0&1&0& 0&1&0& 0&0&1& 0&1&1 \\
0&0&0& 0&0&1& 0&0&1& 0&1&1& 0&1&0 \\
\end{array}
\right).
\end{equation*}

The weight distributions of the codes $\mathcal{C}$ and $\mathcal{C}'$ satisfy:
$A_0(\mathcal{C}')=A_0(\mathcal{C})=1$, $A_6(\mathcal{C}')=A_3(\mathcal{C})=30$, $A_{8}(\mathcal{C}')=A_4(\mathcal{C})=15$, $A_{10}(\mathcal{C}')=A_5(\mathcal{C})=18$.
\end{Example}

\begin{Corollary}
 In Construction \ref{constr}, let the outer code $\C$  be a $k$-optimal code with respect to the sphere-packing bound, having parameters  $[n_1,k_1,d_1]$ over $\F_4$ such that $2^{2m-1}<\mathcal{O}_{d_1}\leq 2^{2m}$ for some positive integer $m$, then  the concatenated code   $\C'$  is a binary $[n=3n_1,k=2k_1,d=2d_1;r=2]$  $k$-optimal  LRC with respect to the sphere-packing-like bound. 
\end{Corollary}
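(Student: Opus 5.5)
We follow the same computation as in the proof of Theorem \ref{conper}. By Construction \ref{constr}, $\C'$ is a binary $[3n_1,2k_1,2d_1;2]$ LRC with $\ell=n_1$ disjoint local repair groups, so the sphere-packing-like bound \eqref{perf1} applies to it. We need to show that $2k_1=\frac{2n}{3}-\lceil\log_2\Omega_{2d_1}\rceil$. Since $\frac{2n}{3}=2n_1$, this is equivalent to $\lceil\log_2\Omega_{2d_1}\rceil=2(n_1-k_1)$.

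\textbf{Step 1.} Reuse the chain of equalities \eqref{eq:omega} from the proof of Theorem \ref{conper}, which gives $\Omega_{2d_1}=\mathcal{O}_{d_1}$.
\begin{itemize}
\item Equality $(\mathbf{B}_1)$ holds because $\lfloor (2d_1-1)/4\rfloor=\lfloor (d_1-1)/2\rfloor$.
\item Equality $(\mathbf{B}_2)$ holds because $\binom{3}{2i_j}$ vanishes for $i_j\ge 2$. Each term is then $3^{i}$ times the number of ways to choose which $i$ groups have $i_j=1$.
\end{itemize}
This part is routine and was already verified in the paper.

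\textbf{Step 2.} Relate the two ceilings. The $k$-optimality of $\C$ means $k_1=n_1-\lceil\log_4\mathcal{O}_{d_1}\rceil$. The hypothesis $2^{2m-1}<\mathcal{O}_{d_1}\le 2^{2m}$ gives $\log_4\mathcal{O}_{d_1}\in(m-\tfrac12,m]$, so $\lceil\log_4\mathcal{O}_{d_1}\rceil=m$. It also gives $\log_2\mathcal{O}_{d_1}\in(2m-1,2m]$, so $\lceil\log_2\mathcal{O}_{d_1}\rceil=2m$. Hence $\lceil\log_2\Omega_{2d_1}\rceil=2m=2\lceil\log_4\mathcal{O}_{d_1}\rceil$.

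\textbf{Step 3.} Conclude: $\frac{2n}{3}-\lceil\log_2\Omega_d\rceil=2n_1-2m=2(n_1-\lceil\log_4\mathcal{O}_{d_1}\rceil)=2k_1=k$. So $\C'$ meets \eqref{perf1} with equality, i.e.\ it is $k$-optimal with respect to the sphere-packing-like bound.

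The only delicate point is Step 2. In general $\lceil\log_2 x\rceil$ can equal $2\lceil\log_4 x\rceil-1$, namely when $2^{2m-2}<x\le 2^{2m-1}$. In that case $\C'$ would fall one short of the bound. The hypothesis on $\mathcal{O}_{d_1}$ is exactly what excludes this case, so the proof should make that explicit.
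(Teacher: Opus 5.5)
Your proposal is correct and follows the paper's proof. Both arguments identify $\Omega_{2d_1}=\mathcal{O}_{d_1}$ via the computation in Theorem \ref{conper}, then use $2^{2m-1}<\mathcal{O}_{d_1}\le 2^{2m}$ to get $2\lceil\log_4\mathcal{O}_{d_1}\rceil=2m=\lceil\log_2\Omega_d\rceil$. Your added remark about the range $2^{2m-2}<x\le 2^{2m-1}$, where $\C'$ would fall one short of the bound, usefully explains why the hypothesis is needed, which the paper leaves implicit.
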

\begin{proof}
   It follows from the proof of Theorem \ref{conper} that $\mathcal{O}_{d_1}=\Omega_d$. Since $2^{2m-1}<\mathcal{O}_{d_1}\leq 2^{2m}$, it follows that  $2\left \lceil \log_4\mathcal{O}_{d_1} \right \rceil=2m=\left \lceil \log_2 \Omega_d \right \rceil$.
\end{proof}
\begin{Example}
    In Code Tables \cite{codebase}, there exists a $[43,36,5]$ cyclic code $\C$ over $\F_4$ generated by $g(x)=x^7 + wx^5+x^4+x^3 +w^2x^2+1$,  where $w$ is a primitive element of $\F_4$ satisfying $w^2+w+1=0$.  This code is  a $k$-optimal linear code with respect to the sphere-packing bound, as $2^{13}< \sum_{i=0}^{2}\binom{43}{i}3^i <2^{14}$. In Construction \ref{constr}, taking $\mathcal{C}$ as the outer code yields a binary $k$-optimal  LRC $\mathcal{C}'$ with respect to the sphere-packing-like bound, having parameters $[129, 72, 10;2]$.

\end{Example}

\section{Construction of optimal binary LRCs attaining the Johnson-like bound}
In this section, we demonstrate that the optimal linear codes over $\F_4$ achieving the classical Johnson bound   can be used to construct optimal binary LRCs attaining the Johnson-like bound.
 
In \cite[Theorem 2.3.8]{Huffman}, for an $[n,k,d]$ linear codes over $\F_q$, where $d$ is even, the classic Johnson bound is given as follows:
\begin{equation}\label{CJohnson}
   q^k\leq \frac{q^n}{\mathcal{O}_d+\frac{\binom{n}{d/2}(q-1)^{d/2}}{A_q(n,d,d/2)}},    
\end{equation}
where $\mathcal{O}_d=\sum_{i=0}^{\lfloor \frac{d-1}{2}\rfloor}\binom{n}{i}(q-1)^i$ and $A_q(n,d,d/2)$ is the maximum number of codewords in a constant weight code over $\F_q$ of length $n$ and minimum distance at least $d$ whose codewords have weight $d/2$.

In fact, since each codeword has weight $d/2$ and the minimum distance is at least $d$, the supports of any two distinct codewords must be disjoint, which immediately yields  that (see \cite[Section V]{Fu98}) \begin{equation}\label{Joh2}
     A_q(n,d,d/2)= \lfloor \frac{n}{d/2}\rfloor.
\end{equation}

The classic Johnson bound can be reformulated as
\begin{equation}\label{Johnson}
   q^k\leq \frac{q^n}{\mathcal{O}_d+\frac{\binom{n}{d/2}(q-1)^{d/2}}{\lfloor \frac{2n}{d}\rfloor}}\triangleq \frac{q^n}{\mathcal{O}_d'}.  
\end{equation}

Since $k$ must be an integer, taking logarithms of \eqref{Johnson}  leads to the following bound:
\begin{equation}\label{cj1}
    k\leq   n-\left \lceil \log_q\mathcal{O}_d' \right \rceil. 
\end{equation}

\begin{Definition}
     An $[n,k,d]$ linear code over $\F_q$, where $d$ is even, is called a nearly perfect code if it meets the bound in (\ref{Johnson}); it is called a $k$-optimal code with respect to the Johnson  bound if it meets the bound in (\ref{cj1}).
\end{Definition}

 In \cite{Ma}, Ma and Ge established an explicit bound on the dimension $k$ of binary LRCs  with disjoint local repair groups by employing the similar argument for deriving the Johnson bound. For an $[n = 3\ell, k, d = 2t + 2; r = 2]$ binary LRC with disjoint local repair groups, where $t + 1$ is even, the bound is given as follows:
\begin{equation}\label{npbb}
2^k\leq \frac{2^{\frac{2n}{3}}}{\Omega_d+\frac{\sum_{ i_1+i_2+\dots+i_{\ell}= \lfloor \frac{d}{4}\rfloor\ }\prod_{j=1}^{\ell}\binom{3}{2i_j}}{\lfloor \frac{2n}{d}\rfloor}}=\frac{2^{\frac{2n}{3}}}{\Omega_d+\frac{\binom{\ell}{d/4}3^{d/4}}{\lfloor \frac{2n}{d}\rfloor}},
\end{equation}
where $\Omega_d= \sum_{i=0}^{\lfloor \frac{d-1}{4} \rfloor}\binom{\ell}{i}3^i$.

Since $k$ must be an integer, taking logarithms of  \eqref{npbbb} leads to the following bound:
\begin{equation}\label{last}
    k\leq   \frac{2}{3}n-\left \lceil \log_2(\Omega_d+\frac{\binom{\ell}{d/4}3^{d/4}}{\lfloor \frac{2n}{d}\rfloor}) \right \rceil. 
\end{equation}

\begin{Theorem}\label{gai}
    For an $[n = 3\ell, \; k, \;d = 2t + 2;\; r = 2]$ binary LRC with disjoint local repair groups, where $t + 1$ is even, the bounds \eqref{npbb} and \eqref{last} can be improved as follows: 
    \begin{equation}\label{npbbb}
2^k\leq \frac{2^{\frac{2n}{3}}}{\Omega_d+\frac{\binom{\ell}{d/4}3^{d/4}}{\lfloor \frac{4n}{3d}\rfloor}}\triangleq \frac{2^{\frac{2n}{3}}}{\Omega_d'},
\end{equation}
\begin{equation}\label{las}
    k\leq   \frac{2}{3}n-
    \left \lceil \log_2\Omega_d' \right \rceil.  
\end{equation}
\end{Theorem}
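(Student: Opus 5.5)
The plan is to redo the Johnson-type double counting used by Ma and Ge, but in the ``group metric'' induced by the disjoint local repair groups instead of the Hamming metric. The gain comes from the constant-weight step, where supports are now counted in groups rather than in coordinates.

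\textbf{Step 1: the ambient space and the group metric.} By \eqref{H1}, the first $\ell$ rows of $H$ force the restriction of every codeword of $\C$ to each repair group to lie in the $[3,2,2]$ even-weight code $\C_{\mathrm{in}}$. Hence $\C\subseteq V\triangleq \C_{\mathrm{in}}^{\ell}$, and $|V|=2^{2\ell}=2^{2n/3}$. For $\u,\v\in V$, define the group distance $d_G(\u,\v)$ as the number of groups on which $\u$ and $\v$ differ. On each such group the difference is a nonzero codeword of $\C_{\mathrm{in}}$, which has weight exactly $2$. Therefore $d(\u,\v)=2d_G(\u,\v)$, and any two distinct codewords of $\C$ satisfy $d_G\ge d/2=t+1$. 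Since $t+1$ is even, $d/4$ is an integer, and $e\triangleq\lfloor (d-1)/4\rfloor = d/4-1$.

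\textbf{Step 2: disjoint balls and the shell around each codeword.} A $d_G$-sphere of radius $i$ in $V$ has $\binom{\ell}{i}3^i$ points, since each nonzero group contributes one of the $3$ nonzero codewords of $\C_{\mathrm{in}}$. So the $d_G$-balls of radius $e$ have size $\sum_{i\le e}\binom{\ell}{i}3^i=\Omega_d$. These balls are pairwise disjoint because $2e<d/2$.

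Now let $S$ be the set of $\v\in V$ lying at group distance exactly $d/4$ from some codeword. Such a $\v$ lies in no ball of radius $e$. Indeed, if $d_G(\v,\c)=d/4$ and $d_G(\v,\c')\le e$, then $d_G(\c,\c')\le d/4+e<d/2$, which forces $\c=\c'$; but then $d_G(\v,\c)\le e<d/4$, a contradiction. Each codeword has exactly $\binom{\ell}{d/4}3^{d/4}$ points at group distance $d/4$.

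\textbf{Step 3: the improved multiplicity bound.} Fix $\v\in S$ and let $\c_1,\dots,\c_s$ be the codewords at group distance $d/4$ from $\v$. The vectors $\c_j-\v$ have group weight $d/4$ and pairwise group distance at least $d/2$. Two vectors of group weight $d/4$ whose group supports overlap are at group distance less than $d/2$. Hence their group supports are pairwise disjoint subsets of $[\ell]$, and
\[
s\le \left\lfloor \frac{\ell}{d/4}\right\rfloor=\left\lfloor \frac{4n}{3d}\right\rfloor.
\]
This is the precise place where the improvement arises. Ma and Ge effectively count disjoint Hamming supports of size $d/2$ among $n$ coordinates, giving $\lfloor 2n/d\rfloor$. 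Here we count disjoint supports of size $d/4$ among $\ell=n/3$ groups, giving $\lfloor 4\ell/d\rfloor$.

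\textbf{Step 4: double counting.} Counting pairs (codeword, point of $S$ at group distance $d/4$) gives
\[
|S|\ge |\C|\,\binom{\ell}{d/4}3^{d/4}\Big/\left\lfloor \tfrac{4n}{3d}\right\rfloor .
\]
Adding the disjoint balls, which are also disjoint from $S$ by Step 2, yields $|\C|\,\Omega_d'\le 2^{2n/3}$. This is \eqref{npbbb}. Taking $\log_2$ and using that $k$ is an integer gives \eqref{las}.

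The only delicate point is Step 1: checking that every codeword really lives in $\C_{\mathrm{in}}^{\ell}$, so that Hamming distance is exactly twice group distance. This is what lets the disjointness argument in Step 3 run over $\ell$ groups rather than $n$ coordinates. The remaining steps are routine.
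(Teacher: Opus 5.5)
Your proof is correct and follows essentially the same route as the paper. The paper also works in the $\mathcal{L}$-space $\mathcal{V}$, which is exactly your $\mathcal{C}_{\mathrm{in}}^{\ell}$. It gets the improvement from the same fact you use: every vector of $\mathcal{V}$ has weight $0$ or $2$ on each repair group, so the disjoint-support constant-weight code $\Lambda_{\mathbf{x}}$ of weight $t+1$ has at most $\lfloor \ell/((t+1)/2)\rfloor$ words. The only difference is presentational: you phrase the argument in the group metric and re-derive the Ma--Ge packing inequality and the count $|\Gamma|=|\mathcal{C}|\,A_{t+1}(\mathcal{V})$ yourself, whereas the paper cites them.
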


\begin{proof}
 Let $\C$ be an $[n=3\ell,k,d=2t+2;r=2]$ binary LRCs with disjoint local repair groups.  The $\L$-cover $W$ of $\C$ is the linear space spanned by the first $\ell$ rows of $H$ (as given in Equation \eqref{H1}), and the $\L$-space $\V$ of $\C$ is the dual of $W$.  Since $\C^\perp \supseteq W$, we have $\C\subseteq \V$. Using the argument for deriving the Johnson bound \cite[Theorem 2.3.8]{Huffman}, Ma and Ge~\cite{Ma} obtained the following inequality:
 \begin{equation}\label{npb}
     |\C|\cdot |B_{\V}(t)|+|\mathcal{N}| \leq |\V|,
 \end{equation}
where $B_{\V}(t) =|\{\v \in \V \mid \operatorname{wt}(\v) \leq t\}|=\Omega_d$ and $\mathcal{N}\triangleq \{\v\in \V| d(\v, \C)=t+1\}$.

 When $t+1$ is even, let $\Gamma \triangleq\{(\c,\x)|d(\c,\x)=t+1, \c \in \C, \x \in \mathcal{N} \}$.   Ma and Ge~\cite{Ma} showed that \begin{equation}\label{e610}
     |\Gamma|=|\C| \cdot A_{t+1}(\mathcal{V})=2^k \cdot\binom{\ell}{d/4}3^{d/4},
 \end{equation}
 where  $A_{t+1}(\mathcal{V})=|\{\v \in \V \mid \operatorname{wt}(\v) = t+1\}|.$
 On the other hand, fix $\x\in \mathcal{N}$. Let $\Gamma_{\x}\triangleq\{\c \in \C|d(\c, \x) =t+1\}$, we have \begin{equation}\label{611}
      |\Gamma|=\sum_{\x\in \mathcal{N}}|\Gamma_{\x}|.
 \end{equation}

 Let $\Lambda_{\x}\triangleq\{\c-\x |d(\c, \x) =t+1, \c\in \C\}$. 
 Clearly, $|\Gamma_{\x}|=|\Lambda_{\x}|$ and the set $\Lambda_{\x}$ forms a binary constant weight code of length $n$ with constant weight $t+1$ and minimum distance at least $2t+2$. Then the supports of any two vectors in $\Lambda_{\x}$ must be disjoint.

Since $\Lambda_{\x} \subset \mathcal{V}$, the bound~\eqref{npbb} can be further improved.  Recall that $\mathcal{V}$ is the dual of the linear space spanned by the first $\ell$  rows of $H$. Thus every vector  $\v\in \mathcal{V}$  can be expressed as $\v=(\v_1\mid\v_2\mid\dots\mid \v_{\ell})$, where each $\v_i\in \F_2^{3}$ satisfies that $\mathrm{wt}(\v_i)=0$ or $2$. This structure yields $| \Gamma_{\x} | =|\Lambda_{\x}|\leq \lfloor  \frac{\ell}{(t+1)/2}\rfloor $.  Hence, from \eqref{611} we obtain\begin{equation}\label{gama}
|\mathcal{N}|  \geq \frac{|\Gamma|}{\lfloor  \frac{\ell}{(t+1)/2}\rfloor} =\frac{|\Gamma|}{\lfloor \frac{4n}{3d}\rfloor}.    
\end{equation}
Then bound~\eqref{npbbb} can be derived from~\eqref{npb}, ~\eqref{e610}  and~\eqref{gama}.
\end{proof}

\begin{Definition}
     An $[n=3\ell,k,d=2t+2;r=2]$ binary LRC with disjoint local repair groups, where $t+1$ is even, is called a nearly perfect LRC if it meets the bound in (\ref{npbbb}); it is called a $k$-optimal LRC with respect to the Johnson-like bound if it meets the bound in (\ref{las}).
\end{Definition}

\begin{Theorem}\label{conjhson}
In Construction \ref{constr}, let the outer code $\mathcal{C}$ be an $[n_1,k_1,d_1]$ nearly perfect code over $\mathbb{F}_4$, where $d_1$ is even, then the concatenated code $\mathcal{C}'$ is a  binary $[n=3n_1,k=2k_1,d=2d_1;r=2]$ nearly perfect LRC. 
\end{Theorem}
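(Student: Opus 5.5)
The plan is to compare the two bounds term by term, as in the proof of Theorem \ref{conper}. By Construction \ref{constr}, $\C'$ is an $[n=3n_1,k=2k_1,d=2d_1;2]$ binary LRC. Its $\ell=n_1$ disjoint local repair groups are exactly the blocks of the parity-check matrix \eqref{H1}. Write $d=2t+2$, so $t=d_1-1$ and $t+1=d_1$, which is even by hypothesis. Hence the Johnson-like bound \eqref{npbbb} of Theorem \ref{gai} applies to $\C'$. Since $2^k=4^{k_1}$ and $2^{2n/3}=4^{n_1}$, it suffices to prove $\Omega_d'=\mathcal{O}_{d_1}'$. Here $\mathcal{O}_{d_1}'$ is the quantity in \eqref{Johnson} with $q=4$, $n=n_1$ and $d=d_1$. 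The equality $4^{k_1}=4^{n_1}/\mathcal{O}'_{d_1}$, which expresses that $\C$ is nearly perfect, then translates directly into equality in \eqref{npbbb}.

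I would verify the three pieces of $\Omega_d'$ in turn.

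\emph{First term.} I would show $\Omega_d=\mathcal{O}_{d_1}$ exactly as in \eqref{eq:omega}. Writing $d_1=2s$, we get $\lfloor (2d_1-1)/4\rfloor=\lfloor (4s-1)/4\rfloor=s-1=\lfloor (d_1-1)/2\rfloor$. Then $\sum_{i=0}^{s-1}\binom{\ell}{i}3^i$ with $\ell=n_1$ is precisely $\mathcal{O}_{d_1}$ over $\F_4$.

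\emph{Numerator of the second term.} Since $d/4=d_1/2$ and $\ell=n_1$, we have $\binom{\ell}{d/4}3^{d/4}=\binom{n_1}{d_1/2}3^{d_1/2}$. This is exactly $\binom{n_1}{d_1/2}(q-1)^{d_1/2}$ with $q=4$.

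\emph{Denominator of the second term.} We have $\lfloor \frac{4n}{3d}\rfloor=\lfloor \frac{4\cdot 3n_1}{3\cdot 2d_1}\rfloor=\lfloor \frac{2n_1}{d_1}\rfloor$. This matches the denominator in \eqref{Johnson}.

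Combining the three pieces gives $\Omega_d'=\mathcal{O}_{d_1}'$, and the chain
\[
2^k=4^{k_1}=\frac{4^{n_1}}{\mathcal{O}_{d_1}'}=\frac{2^{2n/3}}{\Omega_d'}
\]
shows that $\C'$ meets \eqref{npbbb}, so $\C'$ is a nearly perfect LRC.

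There is no real obstacle here. The only step requiring care is the floor identity $\lfloor (2d_1-1)/4\rfloor=\lfloor (d_1-1)/2\rfloor$, which uses the evenness of $d_1$. It is also precisely the improved denominator $\lfloor 4n/(3d)\rfloor$ of Theorem \ref{gai}, rather than Ma and Ge's $\lfloor 2n/d\rfloor$, that makes the second terms match.
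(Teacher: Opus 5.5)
Your proposal is correct and follows the paper's proof essentially verbatim. Both reduce the claim to $\Omega_d'=\mathcal{O}_{d_1}'$ via $2^k=4^{k_1}$ and $2^{2n/3}=4^{n_1}$, then match the three pieces using $\ell=n_1$, $d/4=d_1/2$ and $\lfloor 4n/(3d)\rfloor=\lfloor 2n_1/d_1\rfloor$; your explicit check that $t+1=d_1$ is even, so Theorem \ref{gai} applies, is a detail the paper leaves implicit. One small correction: $\lfloor (2d_1-1)/4\rfloor=\lfloor (d_1-1)/2\rfloor$ holds for every positive integer $d_1$ (as already used in Theorem \ref{conper}), so evenness of $d_1$ is needed only to make $d_1/2$ an integer and $t+1$ even, not for that floor identity.
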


\begin{proof}
    It follows from the proof of Theorem \ref{conper} that $2^k = 4^{k_1}$, $2^{\frac{2n}{3}} = 4^{n_1}$ and  $\mathcal{O}_{d_1}=\Omega_d$. Similarly, since $n_1=\ell$ and $d=2d_1$, it is easy to verify that
    \[\binom{n_1}{d_1/2}3^{d_1/2}=\binom{\ell}{d/4}3^{d/4}, \quad \lfloor \frac{2n_1}{d_1}\rfloor=\lfloor \frac{4n}{3d}\rfloor.\]

In total, we obtain $\mathcal{O}_{d_1}'=\Omega_d'$.
Therefore, if $\mathcal{C}$ is a nearly perfect code, then $\mathcal{C}'$ is a binary nearly perfect LRC.
\end{proof}


The well-known nearly perfect  code  over $\F_4$ is the hexacode.
\begin{Example}\label{hexacode}
    The $[n_1=6,k_1=3, d_1=4]$ hexacode $\C$  is a nearly perfect code over $\F_4$, and its parity-check matrix is as follows:
    \begin{equation*}
         \left(
         \begin{array}{cccccc}
         1& 0 &0 & 1& w&w\\
         0&1&0& w&1&w\\
         0 &0&1 &w &w &1
         \end{array}
         \right),
    \end{equation*}
where $w$ is the primitive element of $\F_4$. It is easy to verify that \[\mathcal{O}_4+\frac{\binom{6}{2}3^{2}}{\lfloor \frac{6}{2}\rfloor}=64=4^{3}.\]

In Construction \ref{constr}, taking the hexacode $\mathcal{C}$ as the outer code yields a  binary $[18, 6, 8;2]$ nearly perfect LRC $\mathcal{C}'$, whose parity-check matrix is given by
\begin{equation*}
\left(
         \begin{array}{ccc|ccc|ccc|ccc|ccc|ccc}
         1&1&1 & &&  &&& &&& &&& &&&\\
		&&& 1&1&1   &&&  &&& &&& \\
            &&& &&& 1&1&1  &&& &&& &&&\\
		&&& &&&  &&& 1&1&1 &&& &&&\\
        &&& &&&  &&&  &&& 1&1&1 &&&\\
        &&& &&&  &&&  &&&  &&& 1&1&1\\
            \hline
		 0&1&0& 0&0&0&  0&0&0  &0&1&0&  0&0&1&  0&0&1  \\
         0&0&1& 0&0&0& 0&0&0 &0&0&1&  0&1&1&  0&1&1  \\
         0&0&0& 0&1&0&  0&0&0 &0&0&1&  0&1&0&  0&0&1  \\
         0&0&0& 0&0&1& 0&0&0& 0&1&1&  0&0&1&  0&1&1\\
         0&0&0& 0&0&0& 0&1&0& 0&0&1&  0&0&1&  0&1&0  \\
         0&0&0& 0&0&0& 0&0&1& 0&1&1&  0&1&1&  0&0&1  \\
         \end{array}
         \right).
\end{equation*}

The weight distributions of $\C$ and $\C'$ are given as
$A_0(\C')=A_0(\C)=1$, $A_8(\C')=A_4(\C)=45,A_{12}(\C')=A_6(\C)=18$.
\end{Example}

\begin{Corollary}
 In Construction \ref{constr}, let the outer code $\mathcal{C}$ be a $k$-optimal code with respect to the Johnson bound, having parameters $[n_1,k_1,d_1]$ over $\mathbb{F}_4$ satisfying $2^{2m-1}<\mathcal{O}_{d_1}'\leq 2^{2m}$, where $d_1$ is even and $m$ is a positive integer. Then the concatenated code $\mathcal{C}'$ is a  binary $[n=3n_1,k=2k_1,d=2d_1;r=2]$ $k$-optimal LRC with respect to the Johnson-like bound.
\end{Corollary}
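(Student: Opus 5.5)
The plan is to mirror the proof of the corresponding sphere-packing corollary, using the identities already established in the proof of Theorem~\ref{conjhson}. First, Construction~\ref{constr} tells us that $\mathcal{C}'$ is an $[n=3n_1,k=2k_1,d=2d_1;2]$ binary LRC with disjoint local repair groups and $\ell=n_1$. Because $d_1$ is even, $d=2d_1$ is divisible by $4$, so $t+1=d/2=d_1$ is even. Hence Theorem~\ref{gai} applies, and bound \eqref{las} is the relevant Johnson-like bound for $\mathcal{C}'$.

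Next I would recall the equality $\mathcal{O}_{d_1}'=\Omega_d'$ from the proof of Theorem~\ref{conjhson}. It rests on three facts. The first is $\mathcal{O}_{d_1}=\Omega_d$, which is inherited from the proof of Theorem~\ref{conper}. The second is $\binom{n_1}{d_1/2}3^{d_1/2}=\binom{\ell}{d/4}3^{d/4}$. The third is $\lfloor 2n_1/d_1\rfloor=\lfloor 4n/(3d)\rfloor$, which holds because $4n/(3d)=4\cdot 3n_1/(3\cdot 2d_1)=2n_1/d_1$.

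The $k$-optimality of $\mathcal{C}$ with respect to the Johnson bound \eqref{cj1} with $q=4$ gives $k_1=n_1-\lceil\log_4\mathcal{O}_{d_1}'\rceil$. The hypothesis $2^{2m-1}<\mathcal{O}_{d_1}'\le 2^{2m}$ gives $\lceil\log_4\mathcal{O}_{d_1}'\rceil=m$. Since $\Omega_d'=\mathcal{O}_{d_1}'$, the same hypothesis also gives $\lceil\log_2\Omega_d'\rceil=2m$. Then
\[
k=2k_1=2n_1-2m=\tfrac{2}{3}n-\lceil\log_2\Omega_d'\rceil,
\]
which is exactly equality in \eqref{las}. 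This shows $\mathcal{C}'$ is $k$-optimal with respect to the Johnson-like bound.

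The only delicate step is the ceiling comparison. In general $2\lceil\log_4 x\rceil$ can exceed $\lceil\log_2 x\rceil$ by one, namely when $2^{2m-2}<x\le 2^{2m-1}$. The assumed window $2^{2m-1}<\mathcal{O}_{d_1}'\le 2^{2m}$ rules out exactly that case, so this step goes through directly. The other point to check is that $\mathcal{O}_{d_1}'$ is rational rather than necessarily an integer. That causes no issue, because both ceilings are taken of the same real number, which lies in $(2^{2m-1},2^{2m}]$.
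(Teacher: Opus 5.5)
Your proposal is correct and follows the paper's proof. Both use the identity $\mathcal{O}_{d_1}'=\Omega_d'$ from the proof of Theorem~\ref{conjhson}, together with the window $2^{2m-1}<\mathcal{O}_{d_1}'\le 2^{2m}$, to get $2\lceil\log_4\mathcal{O}_{d_1}'\rceil=2m=\lceil\log_2\Omega_d'\rceil$, so that $k=2k_1$ meets \eqref{las} with equality; you also spell out two points the paper leaves implicit, namely that $t+1=d_1$ is even so Theorem~\ref{gai} applies, and the explicit chain $k=2n_1-2m$.
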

\begin{proof}
   It follows from the proof of Theorem \ref{conjhson} that $\mathcal{O}_{d_1}'=\Omega_d'$. Since $2^{2m-1}<\mathcal{O}_{d_1}'\leq 2^{2m}$, it follows that  $2\left \lceil \log_4\mathcal{O}_{d_1}' \right \rceil=2m=\left \lceil \log_2 \Omega_d' \right \rceil$.
\end{proof}

Below we give some examples of   $k$-optimal binary LRCs with respect to the Johnson-like bound, which are constructed from Construction \ref{constr}.

\begin{Example}
 In \cite{cap}, the maximal cap in \(PG(3,4)\) has size 17, which we denote by \(\{P(\mathbf{h}_1), P(\mathbf{h}_2), \dots, P(\mathbf{h}_{17})\}\). By  Definition \ref{caps}, any three vectors in  $\{\h_1,\h_2,\dots,\h_{17} \}$ are linearly independent. In other words, there exists an $[n_1=17,\, k_1=13, \,d_1=4]$ linear code $\C$ over $\F_4$ with  the  parity-check matrix $H=(\h_1,\h_2,\dots,\h_{17} )$. In Construction \ref{constr}, taking $\mathcal{C}$ as the outer code yields a binary LRC $\mathcal{C}'$ with parameters $[n=51, k=26, d=8; r=2]$ that meets the bound \eqref{las} with equality.  We can verify that \[\frac{2}{3}n-k=8=2\left\lceil \log_4(205) \right\rceil=\left\lceil \log_2(205) \right\rceil=\left\lceil \log_2\left( 1 + n + \frac{n(n-3)/2}{\lfloor n/6 \rfloor} \right) \right\rceil . \]
\end{Example}

\begin{Example}
    There exists a $[25, 17, 6]$ linear code $\mathcal{C}$ over $\mathbb{F}_4$ in Code Tables \cite{codebase}. Applying Construction \ref{constr} with $\mathcal{C}$ as the outer code yields a binary LRC $\mathcal{C}'$ with parameters $[n=75, k=34, d=12; r=2]$, which achieves the bound \eqref{las} with equality.

    For this code, we calculate
    \[
    \Omega_{12} = 1 + 3n + \frac{9n(n-1)}{2}, \quad  \binom{\ell}{3}3^{3}= \frac{n(n-3)(n-6)}{6},
    \]
    and thus
    \[
    \frac{2}{3}n - k = 16 =\left\lceil \log_4 32963.5 \right\rceil=2\left\lceil \log_232963.5 \right\rceil.
    \]  
 On the other hand,
    \[
    \left\lceil \log_2\left( 1 + 3n + \frac{9n(n-1)}{2} + \frac{n(n-3)(n-6)}{6\lfloor n/6 \rfloor} \right) \right\rceil = \left\lceil \log_2 30376 \right\rceil= 15,
    \]
    which implies  that $\mathcal{C}'$ does not attain the bound~\eqref{last}.
\end{Example}

\section{Conclusion}
 In this paper, we construct locally repairable codes via
concatenated codes and propose a systematic framework for outer code selection to construct optimal  binary LRCs, where the outer codes are linear codes over $\F_4$. The weight distributions of the resulting LRCs are determined by the weight distributions of the selected linear codes over $\F_4$.  We demonstrate that several classes of optimal linear codes over $\F_4$, those achieving the classical Griesmer bound, sphere‑packing bound, or Johnson bound, can be used to construct corresponding optimal binary LRCs that attain the Griesmer‑like bound, sphere‑packing-like bound, and Johnson‑like bound, respectively. In particular, we construct perfect and nearly perfect binary LRCs. Consequently, in practical applications, well-designed linear codes over $\F_4$ can be used to construct high-performance LRCs. When $r>2$, the relationship between the concatenated code and its outer code becomes more involved. A key future direction is to develop a systematic rule for selecting  outer codes, thereby enabling the construction of optimal LRCs for the general case $r>2$. 

\bibliography{sn-bibliography}

\end{document}